\documentclass{llncs}

\usepackage{amssymb}  
\usepackage[dvipdfmx]{graphicx}
\usepackage{algorithmic,algorithm} 
\usepackage{amsmath}
\pagestyle{plain} 
\usepackage{comment}
\usepackage{multirow}
\usepackage{subfig}

\usepackage{float}




\begin{document}
\title{Plane Formation by Synchronous Mobile Robots without
Chirality} 
\author{Yusaku Tomita \and Yukiko Yamauchi \and Shuji Kijima \and 
Masafumi Yamashita}
\institute{Graduate School of Information Science and Electrical
Engineering,\\  Kyushu University, Japan \\ 
\email{tomita@tcslab.csce.kyushu-u.ac.jp, \\
\{yamauchi, kijima, mak\}@inf.kyushu-u.ac.jp}
}

\maketitle

\begin{abstract}
 We consider a distributed system consisting
 of autonomous mobile computing entities,
 called robots, moving in a specified space. 
 The robots are {\em anonymous}, {\em oblivious}, and have neither any access
 to the global coordinate system nor any explicit communication medium.
 Each robot observes the positions of other robots
 and moves in terms of its {\em local coordinate system}.
 To investigate the self-organization power of robot systems,
 {\em formation problems} in the two dimensional space (2D-space) 
 have been extensively studied. 
 Yamauchi et al. (DISC 2015) introduced robot systems in the
 three dimensional space (3D-space).  
 While existing results for 3D-space assume 
 that the robots agree on the handedness of their local coordinate
 systems, 
 we remove the assumption and consider the robots without
 {\em chirality}. 
 One of the most fundamental agreement problems in 3D-space is
 the {\em plane formation problem} that requires the robots to land on a
 common plane, that is not predefined. 
 It has been shown that the solvability of the plane formation
 problem by robots with chirality
 is determined by the {\em rotation symmetry} of their 
 initial local coordinate systems because the robots cannot break it. 
 We show that when the robots lack chirality,
 the combination of rotation symmetry and {\em reflection symmetry} 
 determines the
 solvability of the plane formation problem 
 because a set of symmetric local coordinate systems without chirality
 is obtained by rotations and reflections. 
 This richer symmetry results in the increase of unsolvable instances
 compared with robots with chirality and
 a flaw of existing plane formation algorithm. 
 In this paper, we give a characterization of initial configurations
 from which the robots without chirality can form a plane and 
 a new plane formation algorithm for solvable instances. 
\end{abstract}

\section{Introduction} 

Distributed coordination of mobile computing entities
has been gaining increasing attention from
many areas such as robotics, transportation, construction, 
material engineering, DNA computing, and so on.
Though these wide areas of applications require complicated
operations, they can be classified into fundamental tasks, 
for example, gathering, formation, exploration, surveillance,
flocking, and partitioning. 
The underlying goals of these distributed coordination tasks are 
{\em agreement} and {\em self-organization}. 
We focus on a theoretical aspect of
one of such mobile computing entity models, 
called {\em autonomous mobile
robots}~\cite{CFPS12,DFSY15,FPSV14,FPSW08,FYOKY15,MV16,SY99,UYKY16,YS10,YUKY16,YUY16}.
A mobile robot system consists of a set of robots
each of which autonomously moves in a specified space.
Each robot is an {\em anonymous} (indistinguishable) point, 
and it executes a common distributed algorithm. 
Each robot repeats a {\em Look-Compute-Move cycle},
where it takes a snapshot of the positions of other robots in a Look phase,
computes its next position in the Compute phase, and
moves to the next position in the Move phase.
A {\em configuration} of such a system 
is the set of positions of the robots observed in
the global coordinate system, in other words,
a set of points. 
The robots have neither any access to the global
coordinate system nor any explicit communication medium.
Each robot observes and moves in terms of its
{\em local coordinate system}. 
Though observation is the only way for the robots to
cooperate with each other,
they have to tolerate inconsistency among
observations. 
A robot is {\em oblivious} if in a Compute phase,
it does not remember
the past observations and the past computations,
and can use the observation obtained in the Look phase
of the current cycle. 
Otherwise, a robot is {\em non-oblivious},
which means it is equipped with local memory. 
Existing literature introduces the following three
asynchrony models: 
In the {\em fully-synchronous (FSYNC) model},
the robots execute the $i$th Look-Compute-Move cycle
at the same time.
Thus the robots execute a cycle at each time step
$t=0, 1, 2, \ldots$. 
In the {\em semi-synchronous (SSYNC) model},
the robots follow discrete time steps,
but some robots may skip cycles. 
In the {\em asynchronous (ASYNC) model},
no assumption is made except that the length
of each cycle is finite.

The self-organization power of mobile robot systems has 
been studied for
robots in a discrete space (e.g., graphs)~\cite{DPT13,FIPS13},
in the two-dimensional space
(2D-space or plane)~\cite{CFPS12,DFSY15,FPSV14,FPSW08,FYOKY15,MV16,SY99,YS10,FYOKY15,SY99,YS10},
and in the three-dimensional space
(3D-space)~\cite{UYKY16,YUKY16,YUY16}. 
The {\em formation problem} requires the robots to
form a specified pattern from a given initial configuration.
The set of formable patterns indicates the 
self-organization power of a robot system.
Depending on the specified pattern, the formation
problem is classified into the following problems; 
the {\em point formation problem}, which
is the simplest form of the agreement problem
among the robots~\cite{CFPS12,FPSW05},
the {\em circle formation problem}~\cite{FPSV14,MV16}, 
and the {\em pattern formation problem} for arbitrary target
pattern~\cite{FPSW08,FYOKY15,SY99,YS10}.  
Since real systems work in 3D-space and 
applications such as drones become widely available, 
robot systems in 3D-space form an important and promising field. 
Yamauchi et al. proposed the {\em plane formation problem} 
that requires the robots to land on a common plane
without making any multiplicity.\footnote{As the
plane formation problem does not allow multiplicity,
point formation is not a solution. } 
The plane formation problem is one of the simplest agreement problems
in 3D-space and it bridges between
the robots in 3D-space and the robots in 2D-space,
so that existing techniques in 2D-space
can be used in 3D-space. 

In this paper, we consider the plane formation problem
by mobile robots that lack {\em chirality}.
A robot system does not have chirality when
the robots may not agree on the handedness (right-handed or left-handed)
of their local coordinate systems. 
On the other hand, a robot system has chirality when 
the handedness of all local coordinate systems are identical. 
Lack of chirality introduces heterogeneity among the robots, and 
the model is expected to 
reveal the self-organization power of the weakest robot model.
For example, Flocchini et al. and Mamino et al. showed
that more than four oblivious ASYNC robots 
can form a circle without chirality~\cite{FPSV14,MV16}. 

\begin{figure}[t]
 \centering
 \includegraphics[height=3cm]{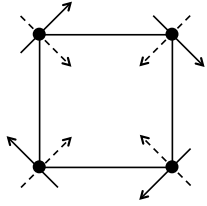}
 \caption{Symmetric initial positions and local coordinate systems.} 
\label{fig:local-square}
\end{figure}

Existing studies show that 
the set of formable patterns in 2D-space is
determined by the initial symmetry among the robots.
Consider an initial configuration of the four robots in 2D-space,
where they form a square and their local coordinate systems are
symmetric regarding the center of the square (Fig.~\ref{fig:local-square}). 
Since the robots execute a common algorithm,
from this initial configuration,
they keep square positions forever
if they execute cycles synchronously.
Yamashita et al. introduced the notion of 
{\em symmetricity} that gives formal explanation for such
situation~\cite{SY99,YS10}. 
We consider the decomposition of a set of points $P$
into regular $m$-gons
centered at one point. 
We consider that one point is a regular $1$-gon with an arbitrary center
and two points form a regular $2$-gon with the center being the 
midpoint. 
Then the maximum value of such $m$ is the symmetricity $\rho(P)$ of $P$
in 2D-space.
When $\rho(P)$ is greater than one, 
the common center is the center of the smallest enclosing
circle of $P$, denoted by $c(P)$, and
$\rho(P)$ is generally the order of the cyclic group that acts on
$P$. 
However, when $c(P) \in P$, this definition allows $\rho(P)=1$,
which means the symmetry of $P$ can be broken.
This is achieved by the robot on $c(P)$ leaving its current position.
It has been shown that irrespective of obliviousness and
asynchrony, 
the robots with chirality in 2D-space can form a target pattern $F$ from
an initial configuration $P$ if and only if $\rho(P)$ divides
$\rho(F)$ except the case where
$F$ is a point with multiplicity two~\cite{FYOKY15,SY99,YS10}.
The exception is called the {\em rendezvous problem},
which is trivially solvable by FSYNC robots while
not solvable by SSYNC (thus ASYNC) robots. 

The notion of symmetricity is later extended to 
the robots in 3D-space~\cite{YUY16}. 
In 3D-space, a set of symmetric local coordinate systems with
chirality is obtained by rotations on the global coordinate systems,
and there are five types of rotation symmetry; 
the {\em cyclic groups}, the {\em dihedral groups},
the {\em tetrahedral group}, the {\em octahedral group},
and the {\em icosahedral group}.
Each rotation symmetry forms a group that 
can be recognized as the set of symmetric
rotation operations on a prism, a pyramid,
a regular tetrahedron, a regular octahedron, and
a regular icosahedron, respectively.
In other words, each rotation group is determined by the
arrangement of rotation axes and their foldings. 
A rotation axis is a $k$-fold axis if it admits rotations
by $2\pi i/k$ for $i=1,2, \ldots, k$. 
Yamauchi et al. introduced rotation group and symmetricity
for a set of points $P$ in 3D-space. 
The {\em rotation group} $\gamma(P)$ of a set of points $P$ 
is the rotation group that acts on $P$ and none of its
supergroup in the set of rotation groups acts on $P$.
The {\em symmetricity} $\varrho(P)$ of $P$ 
is the set of rotation groups $G$ such that
the group action of $G$ on $P$ divides $P$ into $|G|$-sets
where $|G|$ is the order of $G$.
In the same way as 2D-space, the definition of symmetricity
implies symmetry breaking by movement of the robots 
because when some robots are on the rotation axes of $\gamma(P)$,
the robots do not allow the specified decomposition regarding
the rotation axis. In other words,
$\rho(P)$ consists of the rotation groups formed by
``unoccupied'' rotation axes of $\gamma(P)$.
Actually, the robots on rotation axes can remove the rotation axes
by leaving their current positions. 
Yamauchi et al. showed that irrespective of obliviousness,
the FSYNC robots with chirality can form a
target pattern $F$ from an initial configuration $P$
if and only if $\rho(P)$ is a subset of $\rho(F)$~\cite{YUY16}.

However, all these results assume chirality among the robots. 
After Yamashita et al. present
pattern formation algorithms for the oblivious SSYNC robots
with chirality in 2D-space~\cite{SY99,YS10},
Fujinaga et al. investigate
the {\em embedded pattern formation problem}, where
a target pattern is given as a set of landmarks 
on the plane~\cite{FOKY10}. 
They showed that oblivious ASYNC robots can form any embedded
target pattern by presenting an algorithm that is based on 
the ``clockwise'' 
minimum-weight perfect matching between the robots and the landmarks.
Based on this clockwise matching algorithm, 
Fujinaga et al. presented a pattern formation algorithm
for oblivious ASYNC robots with chirality~\cite{FYOKY15}. 
Later Cicerone et al. pointed out that the clockwise
matching algorithm does not work when the robots lack chirality, and 
showed a new embedded target pattern formation algorithm~\cite{CDN16}.
They also pointed out that robots without chirality 
may forever move symmetrically regarding an axis of symmetry. 

\noindent{\bf Our contribution.~}
The goal of our study is to formalize the 
degree of symmetry among the robots without chirality in 3D-space
and investigate their formation power. 
The contribution of this paper is twofold. 
First, we give a definition of symmetricity among the robots
without chirality in 3D-space. 
We consider both rotation symmetry and reflection symmetry 
because when the robots lack chirality, 
a local coordinate system is obtained by
a uniform scaling, a translation, a rotation,
a reflection by a mirror plane, 
or a combination of them
on the global coordinate system.\footnote{
When the robots have chirality, reflection is not necessary since
reflection changes the handedness of local coordinate system.} 
The combination of rotation symmetry and reflection symmetry 
introduces seventeen types of {\em symmetry groups}, which is
well studied in group theory and crystal symmetry~\cite{C97}. 
We extend the notion of symmetricity in \cite{YUY16}
to these seventeen symmetry groups. 
We validate the definition by showing
that the robots cannot resolve their symmetricity forever. 
Then, we give a necessary and sufficient condition
for FSYNC robots without chirality to solve the
plane formation problem.
To show the sufficiency, we present a new plane formation
algorithm since the existing plane formation 
algorithms for robots with chirality~\cite{UYKY16,YUKY16}
do not work in our model. 

We focus on the FSYNC robots with {\em rigid movement}, that is,
all robots synchronously execute a cycle and
reach their next positions in each cycle. 
If a robot stops en route, its movement is {\em non-rigid}.
While most existing results assume non-rigid movement, 
the worst case is when the robots cannot resolve their symmetry. 
Thus the worst case is determined by synchrony and rigid movement.
Formally, any execution of the FSYNC robots with rigid movement 
appears in the SSYNC (thus ASYNC) model with non-rigid movement. 

In \cite{YUKY16},
the cyclic groups and the dihedral groups are called
{\em 2D rotation groups} because 
one rotation axis is recognized,  
and when the rotation group of the current configuration
is a 2D rotation group, 
the robots with chirality can easily land on a ``horizontal'' plane
perpendicular to this rotation axis 
(Fig.~\ref{fig:land-antiprism} and Fig.~\ref{fig:land-prism-1}). 
On the other hand, the remaining three rotation groups
do not act on a set of points on a plane, and 
they are called {\em 3D rotation groups}. 
The necessary and sufficient condition in \cite{YUKY16}
is rephrased as follows: 
The FSYNC robots with chirality can form a plane from an initial
configuration $P$ 
if and only if $\varrho(P)$ consists of 2D rotation groups.
This characterization implies that
the FSYNC robots with chirality can form a plane from
an initial configuration where they form a regular polyhedron 
(except a regular icosahedron) or an icosidodecahedron,
while they cannot form a plane from the remaining (convex)
uniform polyhedra. 
Clearly,
the necessity of this result holds for the robots without chirality. 

\begin{figure}[t]
 \centering
\subfloat[]{\includegraphics[height=2.5cm]{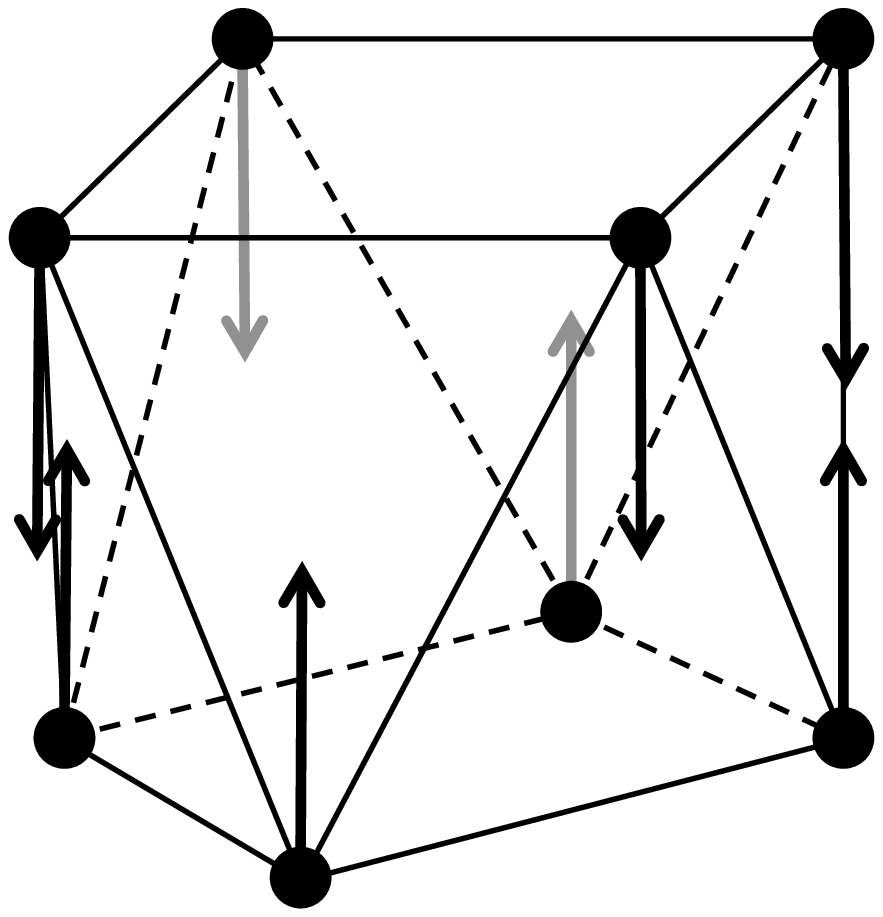}\label{fig:land-antiprism}}
 \quad 
\subfloat[]{\includegraphics[height=2.5cm]{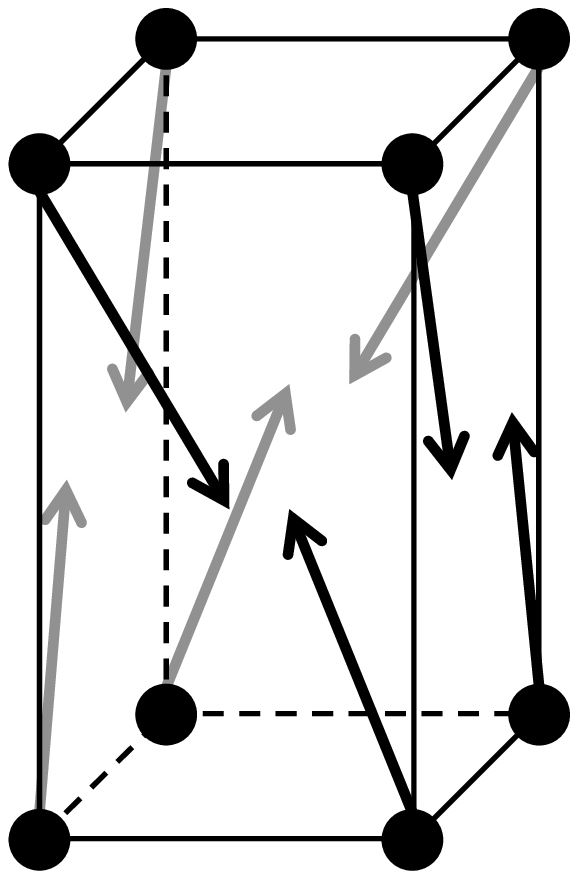}\label{fig:land-prism-1}}
 \quad 
\subfloat[]{\includegraphics[height=2.5cm]{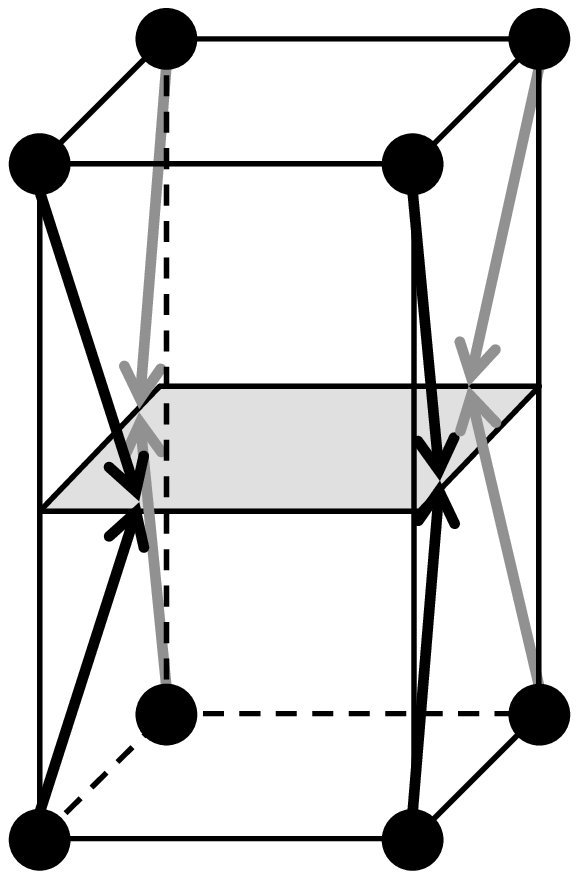}\label{fig:land-prism-2}}
 \quad 
\subfloat[]{\includegraphics[height=2.5cm]{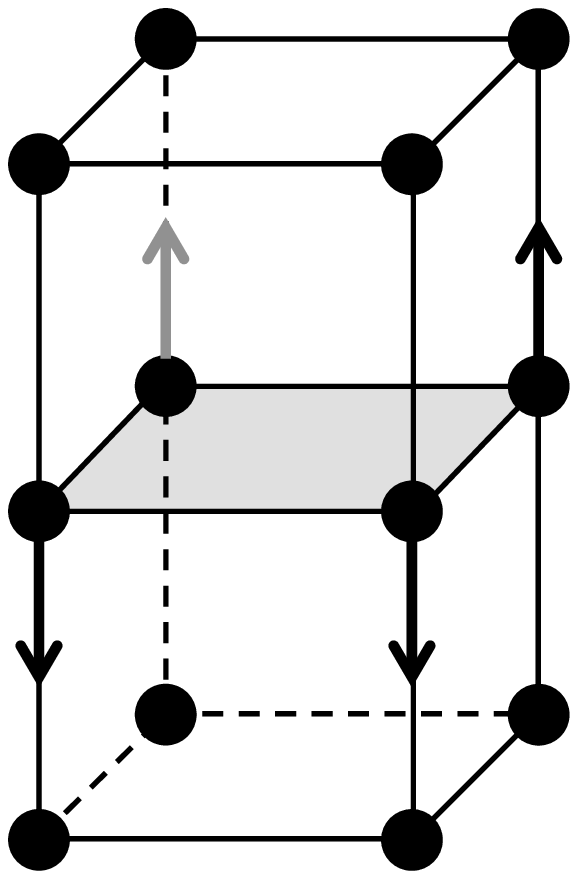}\label{fig:land-prism-3}}
 \caption{Examples of plane formation procedure. 
 (a) Robots directly land on a plane. 
 (b) Robots with chirality follow a ``right-screw rule''. 
 (c) Robots without chirality cannot avoid multiplicity.
 (d) Robots on the horizontal mirror plane can break
 the reflection symmetry. 
}
 \label{fig:examples-init}
\end{figure}

When the robots lack chirality, even when $\gamma(P)$ is a 2D rotation
group,
the ``horizontal'' plane can be a mirror plane and 
the robots cannot resolve the symmetry regarding 
this mirror plane (Fig.~\ref{fig:land-prism-2}). 
Actually, the only plane that the robots can agree is this
mirror plane, but they cannot avoid multiplicity on it.
As a result, a cube is removed from the set of solvable instances,
when compared with the robots with chirality. 
However, we will show that
when there is at least one robot on the horizontal mirror plane,
the robots can remove the mirror plane and can form a plane.
Intuitively, our necessary and sufficient condition
for the FSYNC robots without chirality
requires that an initial symmetricity contains
neither any 3D rotation group 
nor any combination of a 2D rotation group and 
an ``empty'' horizontal mirror plane. 
Our current results are preliminary in the sense of 
symmetry by rotation axes and mirror planes in 3D-space. 
For example, we defined the symmetricity among the robots,
and prove that the robots cannot resolve it,
but the symmetry breaking is not fully explored 
as we will address in the conclusion section. 

\noindent{\bf Organization.~}
In Section~\ref{sec:prel}, we define our robot model and
introduce rotation symmetry and reflection symmetry in 3D-space.
We present a necessary and sufficient condition for
plane formation by FSYNC robots without chirality.  
We show the necessity of the condition in Section~\ref{sec:nec},
and we prove the sufficiency by presenting a
plane formation algorithm in Section~\ref{sec:suf}. 
We conclude this paper with Section~\ref{sec:concl}.

\section{Preliminary}
\label{sec:prel}

\subsection{Robot Model} 

Let $R = \{r_1, r_2, \ldots, r_n\}$ be a set of $n$ anonymous 
robots, each of which is a point in 3D-space. 
We use $r_i$ just for description. 
We consider discrete time $t = 0, 1, 2, \ldots$ and
let $p_i(t) = (x_i(t), y_i(t), z_i(t)) \in \mathbb{R}^3$
be the position of $r_i$ at time $t$ in the global $x$-$y$-$z$
coordinate system
$Z_0$, 
where $\mathbb{R}$ is the set of real numbers.
The {\em configuration} of $R$ at time $t$ is
$P(t) = \{p_1(t), p_2(t), \ldots, p_n(t)\}$.
We denote the set of all possible configurations of $R$
by ${\mathcal P}^n$.
We assume that the initial positions of robots are distinct,
i.e., $p_i(0) \neq p_j(0)$ for $r_i \neq r_j$ and
$|P(0)| = n$.\footnote{When more than one robots are at one point,
it is impossible to separate them by a deterministic algorithm.} 
We also assume that $n \geq 4$ since any three robots are on one plane.

Each robot $r_i$ has no access to the global coordinate system, 
and it uses its local $x$-$y$-$z$ coordinate system $Z_i$.
The origin of $Z_i$ is the current position of $r_i$ while 
the unit distance, the directions, and the orientations of the
$x$, $y$, and $z$ axes of $Z_i$ are arbitrary and never change.
Hence, it is appropriate to denote $Z_i(t)$, but we use a shorter
description. 
Each $Z_i$ is either right-handed or left-handed. 
Thus the robots do not have {\em chirality}. 
We denote the coordinates of a point $p$ in $Z_i$ by $Z_i(p)$.

We consider the {\em fully-synchronous (FSYNC) model},
where the robots start the $t$th {\em Look-Compute-Move cycle}
at the beginning of time $(t-1)$ and finishes it
before time $t$ ($t = 1, 2, \ldots$). 
Each of the Look phase, 
the Compute phase, and the Move phase of a cycle
is completely synchronized at each time step. 
At time $t$, each robot $r_i$ obtains a set
$Z_i(P(t)) = \{Z_i(p_1(t)), Z_i(p_2(t)), \ldots, Z_i(p_n(t))\}$
in the Look phase.
Then $r_i$ computes its next position 
by using a common algorithm $\psi$ in the Compute phase.
A robot is {\em oblivious} if it does not remember
the past observations and the past computations, thus 
the input to $\psi$ is $Z_i(P(t))$.
Otherwise, it is {\em non-oblivious} and the input to $\psi$
contains the past observations and the past computations. 
Finally, $r_i$ moves to the next point in the Move phase.
We assume that each robot always reaches its next position in a
move phase and we do not care for the route to reach there. 
Thus we consider {\em rigid movement}. 

An execution of an algorithm $\psi$
from an initial configuration $P(0)$ is a sequence of configurations
$P(0), P(1), P(2), \ldots$.
When the initial local coordinate systems of $P(0)$, 
the algorithm $\psi$, and initial local memory content (if any)
are fixed, 
the FSYNC execution is uniquely determined. 

The {\em plane formation problem} requires that
the robots land on a plane, which is not predefined,
without making any multiplicity. Hence point formation is not
a solution for the plane formation problem. 
We say that an algorithm $\psi$ {\em forms a plane}
from an initial configuration $P(0)$,
if, regardless of the choice of initial local coordinate systems $Z_i$
for each $r_i \in R$, 
any execution $P(0), P(1), \ldots$ there exists a finite
$t \geq 0$ such that 
(i) $P(t)$ is contained in a plane, 
(ii) $|P(t)| = n$, i.e., all robots occupy distinct positions, and  
(iii) once the system reaches $P(t)$, the robots do not move anymore.

For a set of points $P$, we denote the smallest enclosing ball (SEB)
of $P$ by $B(P)$ and its center by $b(P)$.
A point on the sphere of a ball is said to be {\em on} the ball,
and we assume that the {\em interior} or the {\em exterior} of a ball
does not include its sphere.
The {\em innermost empty ball} $I(P)$ is the ball whose center is $b(P)$,
that contains no point of $P$ in its interior and 
contains at least one point of $P$ on its sphere. 
When all points of $P$ are on $B(P)$,
we say $P$ is {\em spherical}.

\subsection{Symmetry by Rotations and Reflections}

We consider symmetry among the robots which is caused
by not only symmetric positions of the robots
but also symmetric local coordinate systems of them. 
Since any local coordinate system is obtained by
a uniform scaling, a translation, a rotation,
a reflection by a mirror plane, 
or a combination of them
on the global coordinate system, we focus on symmetry operations by
rotation axes and mirror planes. 

A $k$-fold axis admits rotations by
$2\pi i/k$ ($i = 1,2, \ldots, k$). 
These $k$ operations form the {\em cyclic group} $C_k$
of order $k$.
When there are more than one rotation axes,
they also form a group, and there are five kinds of
rotation groups in 3D-space,
each of which is determined by the types of rotation
axes and the arrangement of them~\cite{C97}. 
Clearly, these multiple rotation axes intersect at one point. 
The {\em dihedral group} $D_{\ell}$ consists of a single $\ell$-fold
axis called the {\em principal axis} 
and $\ell$ $2$-fold axes perpendicular to the principal axis,
and its order is $2\ell$.\footnote{The rotation group $D_2$
consists of three $2$-fold rotation axes, and it has been
shown that the principal axis of $D_2$ can be also
recognized~\cite{YUKY16}.
This is because we do not consider $D_2$ only, but
a set of points and a rotation (or symmetry) group that acts on the
points. }
We can recognize $D_{\ell}$ by the rotations on a
prism with regular $\ell$-gon bases.
We abuse the term ``principal axis'' for the single rotation
axis of a cyclic group. 

The remaining three rotation groups are
the {\em tetrahedral group}, the {\em octahedral group},
and the {\em icosahedral group}, and we can recognize them by
the rotations on the corresponding regular polyhedra.
The {\em tetrahedral group} $T$ consists of three $2$-fold axes and
four $3$-fold axes, and its order is $12$.
The {\em octahedral group} $O$ consists of six $2$-fold axes,
four $3$-fold axes, and three $4$-fold axes, and its order is $24$.
The {\em icosahedral group} $I$ consists of fifteen $2$-fold axes,
ten $3$-fold axes, and six $5$-fold axes, and its order is $60$. 
We call the cyclic groups and the dihedral groups
{\em 2D rotation groups}, and
we call the remaining three rotation groups $T$, $O$, and $I$ 
{\em 3D rotation groups} 
because a 3D rotation group does not act on a point on a plane. 

A mirror plane changes the handedness and 
a mirror image of an object has a
different handedness from the original object. 
This is the reason why we need to consider reflection symmetry
when we consider the robots without chirality. 
The {\em bilateral symmetry} $C_s$ consists of one mirror plane
and its order is $2$. 
When there are more than one mirror planes, 
an intersection of mirror planes introduces a rotation axis.
We consider the compositions of rotation symmetry and
mirror planes.
Each symmetry type also forms a group. 
Clearly, the rotation axes and mirror planes of the symmetry type 
intersect at one point. 
The composition of $C_k$ ($k>1$) and a mirror plane perpendicular
to the principal axis is denoted by $C_{kh}$, where 
$h$ represents the ``horizontal'' mirror plane.
The order of $C_{kh}$ is $2k$. 
The composition of $C_k$ ($k>1$) and $k$ mirror planes 
containing the principal axis is denoted by $C_{kv}$, where
$v$ represents the ``vertical'' mirror planes.
The order of $C_{kv}$ is $2k$. 
The composition of $D_{\ell}$ ($\ell \geq 2$) and
a horizontal mirror plane regarding the principal axis
is denoted by $D_{\ell h}$.
However, this horizontal mirror plane together with
rotation axes of $D_{\ell}$ forces $\ell$ vertical mirror planes
each of which contains two $2$-fold axes and the principal axis. 
The order of $D_{\ell h}$ is $4 \ell$. 
The composition of $D_{\ell}$ ($\ell \geq 2$) and
$\ell$ vertical mirror planes is denoted by $D_{\ell v}$.
The vertical mirror planes do not contain any $2$-fold axes,
otherwise the rotation axes of $D_{\ell}$ forces a horizontal mirror
plane. 
The order of $D_{\ell v}$ is $4 \ell$. 

The composition of $T$ and three mutually perpendicular
mirror planes, each of which contains two $2$-fold 
axes is denoted by $T_h$.
The order of $T_h$ is $24$. 
The composition of $T$ and six mirror planes, each of which
contains two $3$-fold axes
is denoted by $T_d$.
The order of $T_d$ is $24$. 
The composition of $O$ and nine mirror planes is denoted by
$O_h$.
Three of the mirror planes are mutually perpendicular
and each of them contains two $4$-fold axes.
Each of the remaining six mirror planes contains
two $3$-fold axes. 
The order of $O_h$ is $48$. 
The composition of $I$ and fifteen mirror planes, each of which
contains two $5$-fold axes is denoted by $I_h$. 
The order of $I_h$ is $120$.

Another type of composite symmetry is {\em rotation reflection},
where a rotation regarding a single rotation axis
and taking a mirror image regarding a mirror plane perpendicular to
the rotation axis are alternated.
This type of symmetry is denoted by $S_{m}$.
Because of the alternation, the folding of the rotation axis is even. 
$S_2$ corresponds to the central inversion, which is denoted by
$C_i$. See Appendix~\ref{sec:list-of-symmetries} for
more detail. 

Let ${\mathbb S} = \{C_1, C_i, C_s, C_k, C_{k h}, C_{k v},
D_{\ell}, D_{\ell h}, D_{\ell v}, S_m, T, T_d, T_h, O, O_h, I, I_h ~|~$ 
$k=2, 3, \ldots, \ell = 2, 3, \ldots\}$
where $C_1$ consists of only the identity element. 
We call the elements of ${\mathbb S}$ {\em symmetry groups}. 
These seventeen types of symmetry groups
describe all symmetry in 3D-space~\cite{C97}. 
In this paper we consider
rotation symmetry separately. 
We call the elements of
$\{C_k, D_{\ell}, T, O, I ~|~ k=1, 2, \ldots, \ell=2, 3, \ldots\}$
the {\em rotation groups}. 

We denote the order of $G \in {\mathbb S}$ with $|G|$. 
When $G'$ is a subgroup of $G$
($G, G' \in {\mathbb S}$), we denote it 
by $G' \preceq G$.
If $G'$ is a proper subgroup of $G$ (i.e., $G \neq G'$), 
we denote it by $G' \prec G$.
For example, we have $D_2 \prec T$, 
$T \prec O, I$, but $O \not\preceq I$.
If $G \in {\mathbb S}$ has a $k$-fold axis, 
$C_{k'} \preceq G$ if $k'$ divides $k$. 
For symmetry groups containing mirror planes,
$T \prec T_h \prec O_h$ but $T_h \not\prec O$. 
For $S_m$, we have $C_{m/2} \prec S_m \prec C_{mh}$.

\subsection{Rotation Group, Symmetry Group, and Symmetricity}

Let $P \in {\mathcal P}^n$ be a set of $n$ points.
The {\em rotation group} $\gamma(P)$ of $P$ is the rotation group 
that acts on $P$ and none of its proper supergroup in
$\{C_k, D_{\ell}, T, O, I ~|~ k=1, 2, \ldots, \ell=2, 3, \ldots\}$ 
acts on $P$. 
The {\em symmetry group} $\theta(P)$ of $P$ is the symmetry group
that acts on $P$ and none of its proper supergroup
in ${\mathbb S}$ acts on $P$. 
Clearly, $\gamma(P)$ is a subgroup of $\theta(P)$ 
($\gamma(P) \preceq \theta(P)$), and 
they are uniquely determined.\footnote{See
for example~\cite{C97}, that shows an algorithm to uniquely determine
the symmetry group of a polyhedra.
The algorithm checks 
rotation axes, mirror planes, and a point of inversion. 
Since we consider a set of points and their convex-hulls, 
we can use the same algorithm. } 
By the definition, when $\theta(P)$ is either $C_1, C_i, C_s$,
$\gamma(P) = C_1$ because such configuration $P$ does not have
any rotation axis. 
Table~\ref{table:sgs-regular} 
shows the rotation group of a set of vertices of each regular polyhedron.

 \begin{table}[t]
   \centering
 \caption{Rotation group, symmetry group, and symmetricity
 of regular polyhedra}
 \label{table:sgs-regular}
 \centering 
 \begin{tabular}{|l|c|c|c|}
  \hline 
  Polyhedron & Rotation group & Symmetry group & Symmetricity \\
  \hline 
  Regular tetrahedron & $T$ & $T_d$ & $\{D_2, S_4\}$\\
  \hline 
  Regular octahedron & $O$ & $O_h$ & $\{D_3, S_6\}$\\
  \hline 
  Cube & $O$ & $O_h$ & $\{D_4, D_{2h}, D_{2v}, C_{4h}, S_4\}$\\
  \hline 
  Regular dodecahedron & $I$ & $I_h$ & $\{D_5, D_2, S_{10}\}$\\
  \hline 
  Regular icosahedron & $I$ & $I_h$ & $\{T, D_3, S_6\}$\\ 
  \hline 
 \end{tabular}
 \end{table}

The group action of $\theta(P)$ 
decomposes $P$ into disjoint subsets. 
Let $Orb(p) = \{g*p ~|~ g \in \theta(P)\}$ be the orbit of $p \in P$
where $*$ denotes the action of $g$ on $s$, and
the orbit space $\{Orb(p) ~|~ p \in P\} = \{P_1, P_2, \ldots, P_m\}$
is called the {\em $\theta(P)$-decomposition} of $P$.
Each element $P_i$ is {\em transitive} because it is one orbit 
regarding $\theta(P)$. 

Yamauchi et al. showed that in configuration $P$ without any multiplicity, 
the robots with chirality can agree on the $\theta(P)$-decomposition
$\{P_1, P_2, \ldots, P_m\}$ of $P$ and
a total ordering among the elements so that 
(i) $P_1$ is on $I(P)$, (ii) $P_m$ is on $B(P)$,
and (iii) $P_{i+1}$ is not in the interior of the
ball centered at $b(P)$ and containing $P_i$ on its sphere~\cite{YUKY16}.
Though their technique relies on chirality, 
we can extend it to robots without chirality.
In~\cite{YUKY16}, each robot translates its local observations
to a ``celestial map'' by considering $I(P)$ as the earth
and its current position is on the half line from
$b(P)$ containing the north pole.
Then, the robot selects an appropriate robot to define the prime
meridian and translates the position of each robot to 
a triple consisting of its altitude, latitude, and longitude.
The ordered sequence of these triples is the {\em local observation} of
the robots. 
However, the lack of chirality does not allow the robots to
agree on the direction of longitude.
Then we make a robot consider both directions and 
select the direction that produces the smallest sequence. 
In the same way as \cite{YUKY16}, we have the
following property. 

\begin{lemma}
\label{lemma:ordering}  
Let $P \in {\mathcal P}_n^3$ and $\{ P_1, P_2, \ldots, P_m \}$ be 
a configuration of robots represented as a set of points 
and its $\theta(P)$-decomposition, respectively. 
Then we have the following two properties: 
\begin{enumerate}
\item 
For each $P_i$ ($i = 1, 2, \ldots , m$), 
all robots in $P_i$ have the same local view.
\item
Any two robots, one in $P_i$ and the other in $P_j$, 
have different local views, for all $i \not= j$.
\end{enumerate}
\end{lemma}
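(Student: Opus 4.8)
The plan is to establish the two properties by analyzing what information each robot can extract from its local observation, building on the "celestial map" construction sketched in the excerpt. The key conceptual point is that the local view of a robot is defined as the canonical ordered sequence of triples (altitude, latitude, longitude) it computes, where the only freedom due to lack of chirality is the choice of longitude direction, and this freedom is removed by taking the lexicographically smaller of the two resulting sequences. I would first make precise that this canonicalization produces a well-defined local view for each robot that depends only on the robot's position relative to the rest of the configuration under the symmetry group $\theta(P)$.

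For property~1, I would argue as follows. Fix a transitive set $P_i$ and two robots at points $p, q \in P_i$. Since $P_i$ is a single orbit of $\theta(P)$, there is a symmetry operation $g \in \theta(P)$ with $g * p = q$. A symmetry operation is an isometry of the global coordinate system fixing $b(P)$ and mapping $P$ to itself, so it preserves all altitudes and all pairwise angular relationships. Applying $g$ therefore maps the celestial map computed from $p$ to the celestial map computed from $q$, possibly composed with a reflection if $g$ reverses orientation. But because each robot already minimizes over both longitude directions, the final canonical sequence is invariant under such a reflection. Hence the local views of $p$ and $q$ coincide. The essential fact I am using is that the canonicalization absorbs exactly the orientation ambiguity that a reflection would otherwise introduce.

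For property~2, the argument runs in the contrapositive: I would show that if two robots at $p \in P_i$ and $q \in P_j$ have identical local views, then $i = j$. Equality of the canonical sequences means there is an isometry (a rotation, or a rotation composed with a reflection) that fixes $b(P)$, maps $P$ to itself, and sends $p$ to $q$; this is because the triple-sequence encodes the entire configuration relative to the observer up to the resolved ambiguity, so matching sequences force a global symmetry carrying one observer onto the other. Any such isometry lies in $\theta(P)$ by maximality of the symmetry group, and therefore $p$ and $q$ lie in the same orbit, i.e. $i = j$, contradicting $i \neq j$.

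The main obstacle I anticipate is the second direction, specifically justifying rigorously that equality of the canonical triple-sequences genuinely implies the existence of a symmetry in $\theta(P)$ rather than merely a coincidental numerical match. This requires verifying that the celestial-map encoding is faithful: the ordered sequence of altitudes, latitudes, and longitudes must determine the configuration up to an element of the orthogonal group fixing $b(P)$, and one must confirm that the reflection freedom introduced by the longitude-minimization is precisely accounted for so that the recovered isometry is orientation-reversing exactly when needed. Since the faithful-encoding argument is essentially the one already carried out in~\cite{YUKY16} for the chiral case, I would invoke that machinery and concentrate the new work on checking that adding the longitude-direction minimization neither loses information (property~2) nor distinguishes robots within an orbit (property~1).
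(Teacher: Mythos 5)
Your proposal is correct and follows essentially the same route as the paper's proof: property~1 from the transitivity of each orbit $P_i$ under $\theta(P)$ (with the longitude-direction minimization absorbing the reflection ambiguity), and property~2 by showing that equal local views induce an isometry fixing $b(P)$ that must lie in $\theta(P)$, contradicting that the robots are in distinct orbits. Your explicit attention to the faithfulness of the celestial-map encoding is in fact more careful than the paper's terse assertion that the induced map ``is a rotation or a reflection in $\theta(P)$.''
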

\begin{proof}
The first property is obvious by the definitions of 
$\theta(P)$-decomposition and local view,
since for any $p, q \in P_i$ there is an element
$g \in \theta(P)$ such that $q = g*p$.

As for the second property,
to derive a contradiction,
suppose that there are distinct integers $i$ and $j$,
such that robots $r_k \in P_i$ and  $r_{\ell} \in P_j$ have the same 
local view. 
Let $V_k^*$ and $V_{\ell}^*$ be the local view of
$r_k$ and $r_{\ell}$. Thus, we have $V_k^* = V_{\ell}^*$.
Let us consider a function $f$ that maps the $d$th element of
$V_k^*$ to that of $V_{\ell}^*$.
More formally, letting the $d$th element of $V_k^*$ (resp., $V_{\ell}^*$)
be $p_x^*$ (resp., $p_y^*$), 
$f$ maps $p_x$ to $p_y$.
Then $f$ is a transformation that keeps $b(P)$ unchanged
by the definition of local view,
i.e., $f$ is a rotation or an reflection in $\theta(P)$,
which contradicts to the definition of $\theta(P)$-decomposition.
\qed 
\end{proof}

By Lemma~\ref{lemma:ordering}, the robots can agree on a total ordering
of the elements of the $\theta(P)$-decomposition of $P$. 
In the following, we assume that $\{P_1, P_2, \ldots, P_m\}$
is sorted by this ordering. 

We denote the set of local coordinate systems for
configuration $P$ with
$Q = \{(o_i, x_i, y_i, z_i ~|~ p_i \in P)\}$
where $o_i$ is the position of $p_i \in P$
(i.e., the origin of $Z_i$) and 
$x_i$, $y_i$, and $z_i$ are the
$(1, 0 , 0)$, $(0, 1, 0)$, and $(0, 0, 1)$ of $Z_i$
observed in the global coordinate system $Z_0$.
We use $(P, Q)$ to explicitly show the set of local coordinate
systems for $P$ though $Q$ contains $P$ as
$\{o_1, o_2, \ldots, o_n\}$.
We define the {\em symmetry group} $\sigma(P, Q)$
of $(P, Q)$ as the symmetry group of that acts on $(P, Q)$ and
none of its proper supergroup in ${\mathbb S}$ acts on it.
Clearly, we have $\sigma(P, Q) \preceq \theta(P, Q)$. 
We define the $\sigma(P,Q)$-decomposition of $(P,Q)$
in the same way as the $\theta(P)$-decomposition of $P$.
We note that the robots of $P$ cannot obtain $Q$ nor $\sigma(P,Q)$
because they can observe only the positions of themselves.

Given a set $P$ of points, 
$\theta(P)$ determines the arrangement of its rotation axes
and mirror planes in $P$. 
We thus use $\theta(P)$ and the arrangement 
of its rotation axes and mirror planes in $P$ interchangeably.
For two groups $G, H \in {\mathbb S}$, 
an {\em embedding} of $G$ to $H$
is an embedding of each rotation axis and each mirror plane 
of $G$ to one of the rotation axes and one of the mirror planes of $H$ 
with keeping their arrangement in $G$.
Any $k$-fold axis of $G$ is embedded so that it
overlaps a $k'$-fold axis of $H$, where $k$ divides $k'$, and 
any mirror plane of $G$ is embedded to a mirror plane of $H$.
However we need careful treatment for $S_m$. 
When $G = S_m$, its mirror plane is embedded to a mirror plane of $H$,
and when $H = S_m$, its mirror plane cannot grant any mirror plane of $G$. 
For example, we can embed $T$ to $O$. 
There are three embeddings of $C_4$ to $O$ depending on the choice of 
the $4$-fold axis. 
We can embed $C_3$ to $S_6$, and $S_6$ to $C_{6h}$.
We can embed $D_{\ell v}$ to $D_{2\ell h}$ but cannot
to $D_{\ell h}$. 
Observe that we can embed $G$ to $H$ if and only if $G \preceq H$. 

We can also consider a $G$-decomposition of a set of points $P$ 
for some $G \prec \theta(P)$ for an embedding of $G$ in
$\theta(P)$. 
However, for such $G$-decomposition,
the robots cannot agree the ordering among the elements
since Lemma~\ref{lemma:ordering} does not hold. 

We now define {\em symmetricity} of a set of points in 3D-space.
Intuitively, symmetricity shows all possible symmetry groups
to which the robots may forever subject.
As the symmetry groups are partially ordered,
we consider a set of such rotation groups. 

\begin{definition}
\label{def:symmetricity}
Let $P \in {\cal P}_n^3$ be a set of points. 
The symmetricity $\varrho(P)$ of $P$
is the set of symmetry groups $G \in {\mathbb S}$ 
that acts on $P$ (thus $G \preceq \theta(P)$)
and there exists an embedding of $G$ to
$\theta(P)$ such that each element of the $G$-decomposition of $P$ 
is a $|G|$-set. 
\end{definition}

We define $\varrho(P)$ as a set because the ``maximal'' symmetry 
group that satisfies the definition is not uniquely determined. 
Maximality means that there is no proper supergroup in
${\mathbb S}$ that satisfies the condition of
Definition~\ref{def:symmetricity}. 
When it is clear from the context, we denote $\varrho(P)$ by
the set of such maximal elements. 
For example, if $P$ forms a cube,
\begin{equation*}
\varrho(P) =
\{C_1, C_i, C_2, C_4, C_{2h}, C_{4h}, C_{2v},
D_2, D_4, D_{2h}, D_{2v},
S_4\}, 
\end{equation*}
and we denote it by $\varrho(P) = \{D_4, D_{2h}, D_{2v}, C_{4h}, S_4\}$. 
The set $\varrho(P)$ does not contain $O$ itself
since $O$-decomposition of $P$ consists of one 
$8$-set, while $|O| = 24$. 
From the definition, $\varrho(P)$ contains every element of
${\mathbb S}$ that is a subgroup of $G$ if $G \in \varrho(P)$. 
See Table~\ref{table:sgs-regular} as an example. 

We can rephrase the definition of symmetricity of a set of points $P$
as a set of symmetry groups formed by rotation axes and
mirror planes of $\theta(P)$ that do not contain any
point of $P$.
This is because a point on a rotation axis
(a mirror plane, respectively) 
does not allow a decomposition into $|G|$-sets
for any $G$ containing the rotation axis. \footnote{
We assume that a set of points $P$ does not contain
any multiplicity. In other words,
we consider an initial configuration $P$. }

We conclude this section with the following two lemmas,
that validate the definition of symmetricity. 
Lemma~\ref{lemma:rho-sigma} shows that there exists an
arrangement of local coordinate systems $Q$ for any
initial configuration $P$ and $G \in \varrho(P)$
such that $\sigma(P, Q) = G$.
Then, Lemma~\ref{lemma:sigma-forever} shows that
the robots are caught in this initial symmetry. 

In the proofs, we take a new view of the positions of the robots.
In the definition of symmetricity $\varrho(P)$
for a set of points $P$, 
we consider an arrangement of a symmetry group in $P$.
On the other hand, to show that the initial symmetry cannot be 
broken, we consider the cases where the positions of robots are 
caught in an arrangement of $G \in \varrho(P)$. 

\begin{lemma}
 \label{lemma:rho-sigma}
 For an arbitrary initial configuration $P \in {\mathcal P}^n$
 and any $G \in \varrho(P)$,
 there exists a set of local coordinate systems $Q$ 
 such that $\sigma(P, Q) = G$. 
\end{lemma}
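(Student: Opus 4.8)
The plan is to build $Q$ explicitly from a free $G$-action and then verify the two inclusions $G \preceq \sigma(P,Q)$ and $\sigma(P,Q) \preceq G$ separately. Since $G \in \varrho(P)$, Definition~\ref{def:symmetricity} supplies an embedding of $G$ in $\theta(P)$ whose $G$-decomposition $\{O_1, O_2, \ldots, O_m\}$ consists entirely of $|G|$-sets; equivalently, $G$ acts on $P$ with all orbits regular, so no point of $P$ lies on a rotation axis or mirror plane of the embedded $G$. First I would fix in each orbit $O_j$ a representative point $p_j$, so that $g \mapsto g * p_j$ is a bijection from $G$ onto $O_j$. I assign $p_j$ a local coordinate system $Z_{p_j}$ and extend it over the orbit by the rule $Z_{g*p_j} := g \cdot Z_{p_j}$, which is well defined precisely because the action is free. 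Carrying this out for every orbit yields $Q$.

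For the lower bound I check directly that each $h \in G$ is a symmetry of $(P,Q)$: it sends $g*p_j$ to $(hg)*p_j$ and the frame $g \cdot Z_{p_j}$ to $(hg)\cdot Z_{p_j}$, which is exactly the frame $Q$ assigns to $(hg)*p_j$. Hence $G$ acts on $(P,Q)$, giving $G \preceq \sigma(P,Q)$.

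The key observation for the upper bound is that a symmetry operation is pinned down by its effect on a single frame. Any $f \in \sigma(P,Q) \preceq \theta(P)$ is an orthogonal map fixing $b(P)$, so $f \cdot Z = Z$ forces $f$ to fix three independent axis directions and hence $f = \mathrm{id}$; consequently the map $f \mapsto f \cdot Z_{p_1}$ is injective on $\theta(P)$. Thus $f$ is completely determined by the pair $(f(p_1), f\cdot Z_{p_1}) = (f(p_1), Z_{f(p_1)})$, which must lie in $(P,Q)$. Moreover, since $f \in \theta(P)$ it maps each $\theta(P)$-orbit to itself, so it can only permute the $G$-suborbits inside a common $\theta(P)$-orbit and can never carry $O_1$ into a suborbit of a different $\theta(P)$-orbit. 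If $f(p_1) \in O_1$, then $f(p_1) = g*p_1$ and preservation of $Q$ gives $f \cdot Z_{p_1} = g \cdot Z_{p_1}$, whence frame-injectivity yields $f = g \in G$.

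It remains to preclude the case where $f$ sends $O_1$ to a distinct $G$-suborbit $O_j$ of the same $\theta(P)$-orbit, and I expect this to be the main obstacle. Here $f(p_1) = g*p_j$, and preservation of $Q$ would force $Z_{p_j} = g^{-1} f \cdot Z_{p_1}$, an equation tying the two independently chosen representative frames. Since $\theta(P)$ is finite there are only finitely many pairs $(f,g)$, hence finitely many forbidden values for each $Z_{p_j}$; choosing the representative frames $Z_{p_1}, \ldots, Z_{p_m}$ generically (each outside this finite set, which is possible as the frames range over a positive-dimensional space) rules out every such $f$. With these choices the only symmetries surviving are those of the previous paragraph, so $\sigma(P,Q) \preceq G$, and combined with the lower bound we conclude $\sigma(P,Q) = G$.
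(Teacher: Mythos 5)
Your construction is the same as the paper's: fix a representative in each $G$-orbit of the $G$-decomposition, assign it a frame, and propagate the frame over the orbit by the group action, with freeness (all orbits being $|G|$-sets) making this well defined; your lower-bound check $G \preceq \sigma(P,Q)$ is exactly what that construction gives. Where you genuinely go beyond the paper is the upper bound. The paper's proof chooses the representative frames ``arbitrarily'' and then simply asserts that the resulting $Q$ satisfies $\sigma(P,Q)=G$, with no argument that no extra symmetry survives; read literally, that argument only establishes $G \preceq \sigma(P,Q)$, and the equality can in fact fail for a bad arbitrary choice. For instance, let $P$ be a cube and $G=C_4 \in \varrho(P)$, so the $G$-decomposition consists of the two faces normal to the chosen axis: if the frame of the bottom representative is taken to be the mirror image, through the horizontal plane, of the top representative's frame, then that reflection preserves $(P,Q)$ and $\sigma(P,Q) \succeq C_{4h} \succ C_4$. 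Your cross-orbit analysis identifies precisely this failure mode --- any surviving $f \in \theta(P)\setminus G$ forces an equation $Z_{p_j}=g^{-1}f\cdot Z_{p_1}$ tying two independently chosen representative frames --- and your genericity step (finitely many forbidden frames inside a positive-dimensional space of frames) closes it, while your frame-rigidity observation (an isometry fixing a full frame is the identity) correctly disposes of the within-orbit case. So your proof delivers the equality that the lemma actually claims, whereas the paper's own proof delivers only the inclusion $G \preceq \sigma(P,Q)$; that weaker inclusion happens to be all that is used downstream (Lemma~\ref{lemma:sigma-forever} only needs $\theta(P(t)) \succeq G$), but your version is the one that matches the statement as written.
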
 
\begin{proof}
 We show a construction of $Q$ for $P$ and $G$. 
 Let $\{P_1, P_2, \ldots, P_m\}$ be the $G$-decomposition of $P$
 for some embedding of $G$ to $\theta(P)$.
 Clearly, such embedding exists since $G \preceq \theta(P)$.
 From the definition, $|P_j| = |G|$ for $j=1, 2, \ldots, m$.
 For each $P_j$, we arbitrary fix a local coordinate system
 of one robot $p_i \in P_j$.
 Then for each $p_k \in P_j$, there exists a unique
 element of $G$ such that $p_k = g_k * p_i$ and
 $g_k \neq g_{\ell}$ if $p_k \neq p_{\ell}$ for any $p_{\ell} \in P_j$.
 Then we fix the local coordinate system of $p_k$
 by applying $g_k$ to the local coordinate system of $p_i$.
 The local coordinate systems $Q$ obtained by this procedure
 satisfies the property.
 \qed 
\end{proof}

\begin{lemma}
 \label{lemma:sigma-forever}
 Irrespective of obliviousness, 
 for an arbitrary initial configuration $P \in {\mathcal P}^n$, 
 any $G \in \varrho(P)$, and any algorithm $\psi$, 
 there exists an execution $P(0)(=P), P(1), P(2), \ldots$
 such that $\theta(P(t)) \succeq G$. 
\end{lemma}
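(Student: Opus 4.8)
The plan is to exhibit a single symmetric execution and prove by induction that the symmetry $G$ survives every step. First I would invoke Lemma~\ref{lemma:rho-sigma} to fix a set of local coordinate systems $Q$ with $\sigma(P, Q) = G$; together with identical (say, empty) initial memory for all robots, this determines a unique FSYNC execution $P(0)(=P), P(1), P(2), \ldots$, as noted earlier in the model description. For the base case, $\sigma(P(0), Q) = G$ implies that $G$ acts on the positions $P(0)$, hence $G \preceq \theta(P(0))$, i.e.\ $\theta(P(0)) \succeq G$.

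For the inductive step I would strengthen the hypothesis to track not only positions but also the fixed coordinate systems and the memory contents: at each time $t$, for every $g \in G$ and every robot $r_i$, the robot occupying $g * p_i(t)$ carries the local coordinate system obtained by applying $g$ to that of $r_i$ and has memory content identical to $r_i$. The crux is a \emph{view-equality claim}: when $g$ is an orthogonal transformation about the common center $b(P(t))$ carrying $r_i$'s frame $Z_i$ to $r_j$'s frame $Z_j$, one has $Z_j(g * p) = Z_i(p)$ for every point $p$, simply because expressing $g(p - p_i)$ in the basis $g(x_i), g(y_i), g(z_i)$ yields the same triple as expressing $p - p_i$ in the basis $x_i, y_i, z_i$. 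Consequently, whenever $g * P(t) = P(t)$, we get $Z_j(P(t)) = Z_j(g * P(t)) = Z_i(P(t))$, so $r_i$ and $r_j$ feed identical inputs into the common deterministic algorithm $\psi$ (this is exactly property~1 of Lemma~\ref{lemma:ordering}, re-derived for the chosen $G$-embedding). Hence they compute identical next positions in their respective frames and identical memory updates; transporting back through $g$ gives $g * p_i(t+1) = p_j(t+1)$ and preserves the memory symmetry, so $g * P(t+1) = P(t+1)$ for all $g \in G$, i.e.\ $\theta(P(t+1)) \succeq G$.

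The main obstacle is making the view-equality claim precise while simultaneously covering reflections and non-oblivious memory. Because $g$ is an isometry, the coordinate-transport identity holds uniformly whether $g$ is a rotation or a reflection, and the frame-transport in Lemma~\ref{lemma:rho-sigma} carries handedness along consistently; this is precisely where lack of chirality is accommodated, and no appeal to a fixed orientation is needed. For memory I would rely on anonymity: all robots start from the same state, so memory symmetry is built into the base case and is preserved by the identical-computation argument, which is why the conclusion holds \emph{irrespective of obliviousness}. A final point to verify is that $G \in \varrho(P)$ forces every orbit of the chosen embedding to be a $|G|$-set, so $G$ acts freely and no robot ever lies on a rotation axis or mirror plane of $G$; this guarantees that the paired robots $r_i, r_j$ are genuinely distinct and that the symmetry operations never degenerate throughout the execution.
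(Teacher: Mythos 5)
Your proof is correct and follows essentially the same route as the paper's: fix symmetric local coordinate systems via Lemma~\ref{lemma:rho-sigma}, then induct on $t$, showing that robots related by any $g \in G$ observe identical snapshots, hence compute identical local destinations whose global images remain related by $g$, with non-obliviousness handled by identical initial memory contents. One caveat: your final claim that no robot ever lies on a rotation axis or mirror plane of $G$ \emph{throughout the execution} is not justified (an arbitrary $\psi$ may move robots onto axes or even into collisions), but your induction never actually uses it---the invariance $g * P(t) = P(t)$ is all that is needed, exactly as in the paper.
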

\begin{proof}
 Let $Q$ be initial local coordinate systems for $P$ such that
 $\sigma(Q, P) = G$ for arbitrary $G \in \varrho(P)$.
 By Lemma~\ref{lemma:rho-sigma},
 such $Q$ always exists.
 Let $\{P_1, P_2, \ldots, P_m\}$ be the $G$-decomposition of
 $P$. 
 
 From this arrangement of initial local coordinate systems, 
 the robots forming $P_j$ keeps their symmetry group $G$
 forever for any algorithm $\psi$.
 We first show an induction for the oblivious FSYNC robots.
 For any $p_i, p_k \in P_j$,
 when $\psi(Z_i[P(0)]) = x$ holds, we have
 $\psi(Z_k[P(0)]) = x$ and 
   $Z_0[Z_k(\psi(Z_k[P(0)]))] = g_k * Z_0[Z_i(\psi(Z_i[P(0)]))]$.
 Let $P_j(1) \subseteq P(1)$ be the positions of robots of $P_j$
 in $P(1)$. 
 Thus $\theta(P_i(1)) = G$ and $\theta(P(1)) \succeq G$.
 By an easy induction for $t = 1, 2, \ldots$,
 we have the property for any $P(t)$. 
 
 Non-obliviousness does not improve the situation.
 When the initial memory contents are identical (for example, empty),
 the above discussion holds for the transition from $P(0)$ to $P(1)$.
 During this transition,
 the robots in the same element $P_j$ of the
 $G$-decomposition of $P$ obtain the same
 local observation, performs the same computation, and
 exhibits the same movement.
 Thus, their local memory content are still the same in $P(1)$,
 and they continue symmetric movement during the
 transition from $P(1)$ to $P(2)$.
 \qed 
\end{proof}

\section{Impossibility of Plane Formation}
\label{sec:nec}

The following theorem shows a necessary condition
for the FSYNC robots without chirality to form a plane,
that will be shown to be a sufficient condition in
Section~\ref{sec:suf}.
The condition means that to solve the
plane formation from an initial configuration $P$, 
$\varrho(P)$ should not contain any of the following
symmetry groups:
$T$, $T_d$, $T_h$, $O$, $O_h$, $I$, $I_h$, 
$C_{k h}$ ($k \geq 3$), and $D_{\ell h}$ ($\ell \geq 2$).

\begin{theorem}
 \label{theorem:nec}
 Irrespective of obliviousness,
 the FSYNC robots without chirality can form a plane
 from an initial configuration $P$ 
 only if $\rho(P)$ consists of
 $C_1$, $C_s$, $C_i$, $C_k$, $C_{k v}$, $C_{2h}$,
 $D_{\ell}$, $D_{\ell v}$, and $S_m$. 
\end{theorem}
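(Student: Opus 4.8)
The plan is to prove the contrapositive: if $\varrho(P)$ contains one of the forbidden groups
\[
G \in \{T, T_d, T_h, O, O_h, I, I_h\} \cup \{C_{kh} : k \ge 3\} \cup \{D_{\ell h} : \ell \ge 2\},
\]
then no algorithm forms a plane from $P$. The engine is Lemma~\ref{lemma:sigma-forever}: for this $G$ there is a set $Q$ of local coordinate systems with $\sigma(P,Q)=G$ (Lemma~\ref{lemma:rho-sigma}) whose induced FSYNC execution $P=P(0),P(1),\dots$ satisfies $\theta(P(t)) \succeq G$ for every $t$. Since any plane-formation algorithm must succeed for this particular $Q$, it suffices to show that no configuration along such a $G$-symmetric execution can be a legal plane, i.e.\ can be contained in a plane while keeping all $n$ robots at distinct positions.

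First I would upgrade ``$\theta(P(t))\succeq G$'' to the statement that the $G$-orbits of $P(t)$ have full size $|G|$. In the execution built in the proof of Lemma~\ref{lemma:sigma-forever}, the robots forming one element $P_j$ of the initial $G$-decomposition move $G$-symmetrically, so their images stay a single $G$-orbit $P_j(t)$. If this orbit had size $<|G|$, some nontrivial element of $G$ would fix one of its points, forcing two of these symmetric robots to coincide --- a multiplicity. Hence at the claimed formation time $t^\ast$, the requirement $|P(t^\ast)|=n$ forces every $G$-orbit to have size exactly $|G|$; equivalently, \emph{no point of $P(t^\ast)$ lies on any rotation axis or mirror plane of $G$}. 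So the theorem reduces to the purely geometric claim: for each forbidden $G$, no set of $n\ge 4$ distinct coplanar points admits $G$ as a symmetry subgroup with no point on an axis or a mirror plane of $G$. Call the containing plane $\Pi$; since non-collinear points span $\Pi$, every symmetry of the set maps $\Pi$ to itself (the collinear case I would dispatch separately, since a line through the center is not preserved by the relevant rotations and reflections).

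The geometric claim I would settle case by case. For a 3D rotation group ($T,O,I$, or the reflection extensions $T_d,T_h,O_h,I_h$, each of which contains one), the group has rotation axes in more than two independent directions, so it preserves no plane; this is exactly the inherited necessity from~\cite{YUKY16}, and it contradicts $\Pi$ being invariant. For $G=C_{kh}$ with $k\ge 3$: the horizontal mirror $\sigma_h$ belongs to $G$, and since no point lies on $\sigma_h$ we have $\Pi\neq\sigma_h$; an invariant plane different from a mirror plane must be perpendicular to it, so $\Pi$ contains the principal ($k$-fold) axis. But the $k$-fold rotation maps a plane through its axis to the plane rotated by $2\pi/k$, which differs from $\Pi$ whenever $k\ge 3$ --- a contradiction (and this is precisely why the permitted $C_{2h}$, where $2\pi/k=\pi$, survives). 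For $G=D_{\ell h}$ with $\ell\ge 2$: here $G$ contains the horizontal mirror together with $\ell\ge 2$ vertical mirrors, whose normals are the principal axis and at least two distinct horizontal directions. As above, $\Pi$ --- being invariant and equal to none of these mirrors --- must be perpendicular to each, hence must contain all three normals; but these span $\mathbb{R}^3$, so no $2$-dimensional $\Pi$ can contain them. The main obstacle is the second step, making the ``full orbit'' reduction airtight: one must argue from the symmetric execution of Lemma~\ref{lemma:sigma-forever} (not merely from $\theta(P(t))\succeq G$) that no robot can sit on an axis or mirror plane without creating a multiplicity, and then handle the degenerate collinear configurations uniformly so that the clean invariant-plane argument applies.
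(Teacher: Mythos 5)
Your proposal is correct and follows essentially the same route as the paper: prove the contrapositive by using Lemma~\ref{lemma:rho-sigma} and Lemma~\ref{lemma:sigma-forever} to trap the robots in a $G$-symmetric execution for a forbidden $G$, inherit the 3D-rotation-group case from~\cite{YUKY16}, and derive a forced multiplicity for $C_{kh}$ ($k\geq 3$) and $D_{\ell h}$ ($\ell \geq 2$) from the horizontal mirror plane. The only difference is organizational: the paper argues that the terminal plane must coincide with the horizontal mirror plane so that mirror-symmetric robot pairs collide on it, whereas you first deduce from collision-freeness that every $G$-orbit is full (no robot on any axis or mirror) and then show no $G$-invariant plane can exist --- two phrasings of the same contradiction, with yours additionally treating the degenerate collinear case that the paper leaves implicit.
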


\begin{proof}
 Let $\psi$ be an arbitrary plane formation algorithm for
 an initial configuration $P$ such that
 $\varrho(P)$ contains a 3D rotation group, 
 $C_{k h}$ ($k \geq 3$), or $D_{\ell h}$ ($\ell \geq 2$).
 We have the following three cases. 
 
 \noindent{\bf Case A:~} $\varrho(P)$ contains $C_{k h}$
 for some $k \geq 3$. 
   
 Let $Q$ be a set of initial local coordinate systems for $P$
 such that $\sigma(P, Q) = C_{k h} \in \varrho(P)$ ($k \geq 3$).
 From Lemma~\ref{lemma:sigma-forever},
 irrespective of obliviousness, 
 for any algorithm $\psi$, there exists an execution
 $P=P(0), P(1), P(2), \ldots$ such that
 $\theta(P(t)) \succeq C_{k h}$ for any $t \geq 0$. 
 Assume that $P(t')$ be a terminal configuration. 
 Then $\theta(P(t'))$ is a supergroup of $C_{k h}$, and 
 $\theta(P(t'))$ has the mirror plane of $\theta(P)$. 
 The robots are on the mirror plane,
 otherwise the robots are not on one plane
 because of their symmetry. 
 Let $\{P_1, P_2, \ldots, P_m\}$ be
 the $\sigma(P, Q)$-decomposition of $P(=P(0))$.
 For each $P_i$ ($1 \leq i \leq m$) and $p \in P_i$,
 there exists $q \in P_i$ such that $p$ and $q$ are at
 symmetric positions regarding the mirror plane of $C_{k h}$. 

 By Lemma~\ref{lemma:sigma-forever},
 the robots of $P_i$ move with keeping the rotation axis and the
 mirror plane of the embedding of $C_{k h}$ in $P$. 
 Thus $p$ and $q$ occupy the same point on the mirror plane of
 $C_{k h}$ in $P(t')$. 
 Hence the robots cannot avoid multiplicity and
 $P(t')$ is not a terminal configuration of the
 plane formation problem. 
 
 \noindent{\bf Case B:~} $\varrho(P)$ contains $D_{\ell h}$
 for some $h \geq 2$.
 
 Let $Q$ be a set of initial local coordinate systems for $P$
 such that $\sigma(P, Q) = D_{\ell h} \in \varrho(P)$ ($\ell \geq 2$).
 By Lemma~\ref{lemma:sigma-forever}, irrespective of obliviousness, 
 for any algorithm $\psi$, there exists an execution
 $P=P(0), P(1), P(2), \ldots$ such that
 $\theta(P(t)) \succeq D_{\ell h}$ for any $t \geq 0$.
 We have the same discussion as Case A. 
 If there exists a terminal configuration, 
 the robots are on the initial horizontal mirror plane of $D_{\ell h}$.
 Hence, the robots cannot avoid multiplicity and
 $P(t')$ is not a terminal configuration of the
 plane formation problem. 
  
 \noindent{\bf Case C:~} $\varrho(P)$ contains a 3D-rotation
 group.
 
 The impossibility for this case has been shown for
 robots with chirality in~\cite{YUKY16}
 and the result holds for our robots because
 our model allows the robots with chirality. 
 We note that when $\varrho(P)$ contains 
 $T_d, T_h, O_h$ or $I_h$, then it contains
 the corresponding rotation group because it is a subgroup
 without any mirror plane.
 \qed 
\end{proof}

\section{Plane Formation Algorithm}
\label{sec:suf}

In this section, we show a plane formation algorithm for
oblivious FSYNC robots without chirality 
and prove our main theorem.

\begin{theorem}
 \label{theorem:main}
 Irrespective of obliviousness,
 the FSYNC robots without chirality can form a plane
 from an initial configuration $P$ 
 if and only if $\varrho(P)$ consists of
 $C_1$, $C_i$, $C_s$, $C_k$, $C_{k v}$, $C_{2h}$,
 $D_{\ell}$, $D_{\ell v}$, and $S_m$. 
\end{theorem}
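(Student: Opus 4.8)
Since the ``only if'' direction is precisely Theorem~\ref{theorem:nec}, the plan is to establish the ``if'' direction by exhibiting an explicit oblivious algorithm $\psi$ and proving that, for every initial configuration $P$ with $\varrho(P)$ confined to the listed groups and every choice of local coordinate systems $Q$, the induced FSYNC execution reaches a coplanar, multiplicity-free, stationary configuration. Throughout I would use Lemma~\ref{lemma:ordering} so that the robots agree on the ordered $\theta(P)$-decomposition $\{P_1,\dots,P_m\}$; every rule of $\psi$ is then stated as a common function of this ordered decomposition, and two robots in a common $P_i$ necessarily compute the same instruction in their own frames.

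The geometric heart of the argument is a single criterion: the robots can safely land on a plane $\Pi$ precisely when $\Pi$ is invariant under the acting symmetry group and the group acts \emph{faithfully} on $\Pi$ (equivalently, no nonidentity element reflects across $\Pi$, so every element restricts to a rotation or a reflection \emph{within} $\Pi$). I would verify, group by group, that each allowed group admits such a $\Pi$: the rotation groups and their vertical-mirror extensions $C_k, C_{kv}, D_\ell, D_{\ell v}$ land on the horizontal plane perpendicular to the principal axis, where the action becomes a planar cyclic or dihedral group of the full order; $C_i$ lands on any plane through $b(P)$, on which inversion becomes an in-plane half-turn; $C_s$ lands on a plane perpendicular to its mirror; $S_m$ lands on the horizontal plane, where the improper rotations restrict to the faithful planar group $C_m$; and crucially $C_{2h}$ lands on a \emph{vertical} plane containing its $2$-fold axis. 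The last case is exactly what separates $C_{2h}$ from the forbidden $C_{kh}$ $(k\ge 3)$ and $D_{\ell h}$: a $2$-fold rotation fixes every plane through its axis, so a $C_{2h}$-orbit already lies in such a vertical plane as four distinct points and the induced action (the Klein four-group) is faithful, whereas for $k\ge 3$ the $k$-fold rotation permutes these planes and the only invariant plane is the horizontal mirror, on which $\sigma_h$ lies in the kernel and forces a two-to-one collapse; an identical order mismatch (a kernel of order two) rules out every $D_{\ell h}$. Checking that the forbidden groups admit no faithful invariant plane recovers the necessity boundary and confirms the criterion is tight.

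A configuration with $\varrho(P)$ in the allowed list may nevertheless have $\theta(P)$ equal to a forbidden group, when robots occupy the very axes or mirror planes that make the forbidden symmetry breakable. The algorithm therefore needs a preliminary \emph{symmetry-reduction} stage: using the ordered decomposition to distinguish the innermost occupied special locus, the robots sitting on it move off so as to collapse $\theta$ onto a group matching $\varrho(P)$ that admits a faithful invariant plane, after which the landing stage applies. Because the robots are oblivious, each stage must be recognizable from the current positions alone (e.g.\ via $B(P)$, $I(P)$, and which axes and planes of $\theta(P)$ are occupied), and I would argue termination by a finite sequence of stages—first emptying the obstructing loci, then landing on $\Pi$—together with the fact that once all robots lie on $\Pi$ as distinct points the rule instructs no further motion, giving condition~(iii).

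The step I expect to be the main obstacle is the absence of chirality. By Lemma~\ref{lemma:ordering} two robots of a common $P_i$ share a view and hence compute identical targets in their own frames, but their frames may be related by a reflection rather than by the rotation carrying one robot to the other; I must show the resulting global moves are still consistent with the intended invariant plane, never produce a multiplicity, and---most delicately---never manufacture a horizontal mirror across $\Pi$, which would reintroduce the Case~A/B obstruction of Theorem~\ref{theorem:nec}. I would control this by always choosing $\Pi$ to be invariant under $\theta(P)$ together with any reflection the lost handedness could introduce, and by assigning landing positions through the canonical longitude-symmetrized ordering of Lemma~\ref{lemma:ordering}, so that every admissible $Q$ yields the same faithful action on the same $\Pi$. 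Combining the reduction stage, the landing stage, and this chirality-invariance check then yields a plane for every solvable instance, completing the characterization.
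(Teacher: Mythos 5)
Your skeleton matches the paper's (necessity is Theorem~\ref{theorem:nec}; sufficiency via an explicit oblivious algorithm built on the ordered $\theta(P)$-decomposition of Lemma~\ref{lemma:ordering}), and your ``faithful invariant plane'' criterion correctly identifies the landing planes the paper chooses in its plane-agreement phase: the horizontal plane for $C_k, C_{kv}, D_\ell, D_{\ell v}$, the mirror plane for $S_m$, and a vertical plane through the $2$-fold axis for $C_{2h}$. However, there are two genuine gaps. The first is the symmetry-reduction stage, which you assert rather than prove, and which is where the paper does most of its work. When $\theta(P)$ is a 3D rotation group, the robots occupying its axes (forming a regular tetrahedron, octahedron, dodecahedron, or icosidodecahedron) share a local view, so they move \emph{symmetrically}, and the adversary's choice of local frames decides the outcome. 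The robots cannot ``move off so as to collapse $\theta$ onto a group matching $\varrho(P)$'': the resulting group is not under their control, and the claim that must actually be proved is that \emph{no} adversarial outcome is again a 3D rotation group or a group with an unoccupied horizontal mirror plane. This is exactly Lemma~\ref{lemma:show-symm-single} for the go-to-center algorithm (Algorithm~\ref{alg:go-to-center}), whose proof is a long case analysis of which vertex subsets of an $\epsilon$-expanded tetrahedron, cube, icosahedron, or $\epsilon$-truncated icosahedron could realize $D_{2h}$, $D_{3h}$, $C_{3h}$, $D_{5h}$, $C_{5h}$, etc. Nothing in your proposal substitutes for this, and your suggested safeguard (choosing $\Pi$ to absorb the lost handedness) is inapplicable here, since during symmetry breaking there is no landing plane yet; the danger lies in where the scattering robots end up.

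The second gap is that your criterion is stated as an equivalence (``can safely land precisely when the action is faithful''), but faithfulness is not sufficient: it does not make the perpendicular projection onto $\Pi$ injective on the configuration. A $D_\ell$- or $D_{\ell v}$-orbit forming a prism has vertically aligned pairs whose feet on the horizontal plane coincide, and two robots on the principal axis both project to $b(P)$, even though $D_\ell$ acts faithfully on that plane. Such robots lie in a common orbit, hence by Lemma~\ref{lemma:ordering} they have identical views and identical positions in your canonical ordering, so no deterministic assignment of landing positions can separate them; the tie must be broken by reference to a \emph{different} orbit. This is precisely the paper's second and third phases: the prism robots twist toward the uniquely determined nearest point of a non-prism orbit $P_j$ (which exists, for otherwise $\theta(P)$ would have a horizontal mirror plane), and points on the principal axis are handled with a fictitious version of the same move, before the final cross-orbit collision resolution. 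Without these ingredients your algorithm produces multiplicities on exactly the configurations that make the sufficiency direction delicate.
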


The necessity is clear from Theorem~\ref{theorem:nec}.
We prove the sufficiency by presenting a plane formation algorithm
for solvable instances (i.e., initial configurations). 
Because of the condition of the theorem, 
solvable instances are classified into the following three types.

\begin{description}
 \item[Type 1:~] Initial configurations with 3D rotation groups. 
	    From the condition of Theorem~\ref{theorem:main},
	    any initial configuration $P$ of this type contains one of the
	    following polyhedra as an element of its 
	    $\theta(P)$-decomposition 
	    because some robots are on some rotation axes:
	    a regular tetrahedron,
	    a regular octahedron,
	    a regular dodecahedron, and an icosidodecahedron.\footnote{
	    Points on rotation axes of a 3D rotation group
	    also forms a cube, a cuboctahedron, and a regular icosahedron.
	    However, these polyhedra allow $T$($\prec O, I$)
	    to join their symmetricity. } 
 \item[Type 2:~] Initial configurations with 2D rotation groups
	    with at least one rotation axis.
	    From the condition of Theorem~\ref{theorem:main},
	    any initial configuration $P$ of this type satisfies that
	    either $\theta(P)$ 
	    does not have the horizontal mirror plane or 
	    there are some robots on a horizontal mirror plane. 
 \item[Type 3:~] Initial configurations without any rotation axis.
	    This case is further divided into asymmetric initial
	    configurations, 
	    a symmetric initial configurations with a single mirror plane,
	    and a symmetric initial configurations with point of
	    inversion. 
\end{description}

The proposed algorithm handles these three types separately.
The robots can agree on the type of the current configuration and
they execute the corresponding algorithm.
For the first case, the robots first break their symmetry and
translates an initial Type 1 configuration to another Type 2 or Type 3 
configuration. 
For the second case, the robots agree on a plane
perpendicular to the principal axis and containing the center of
their smallest enclosing ball, and land on the plane. 
For the third case, if the initial configuration is
asymmetric, the robots agree on a plane by using
the total ordering among themselves.
Otherwise, the robots agree on a plane other than
the mirror plane by using two elements of their
decomposition and lands on it.

Before we go into the detailed description of the proposed algorithm,
we show preparation steps for an initial configuration $P$. 
These steps can be realized very easily in the FSYNC model 
because the set of robots to move is easy to recognize,
and the movement neither makes collisions nor 
changes the symmetry group and the symmetricity of the robots. 
In the following, we use a point and a robot at the point
interchangeably.
For example, the position $p_i \in P$ of robot $r_i$ means
$r_i$, and $P' \subseteq P$ means the set of robots
at positions of $P'$. 

First, when $b(P) \in P$, the preparation phase sends the robot on $b(P)$
to an arbitrary point in the interior of $I(P)$,
so that a resulting configuration will be asymmetric.
The next position of the robot at $b(P)$ is,
for example, a point neither on any rotation axis nor 
on any mirror plane of $\theta(P \setminus \{b(P)\})$.

Second, for Type 1 cases, the preparation phase moves an element 
$P_i$ of the $\theta(P)$-decomposition of $P$ forming
one of the specified polyhedra to the interior of $I(P)$.

Finally, for Type 2 cases, the preparation phase moves an element 
$P_j$ of the $\theta(P)$-decomposition of $P$ on the horizontal mirror
plane of $\theta(P)$ to the interior of $I(P)$.
At the same time, if there exists another element $P_k$ of the
$\theta(P)$-decomposition of $P$ that is on neither 
the horizontal mirror plane nor the principal rotation axis,
the preparation phase makes the element slightly ``shrink'' 
so that $\theta(P)$ is kept by $P_k$.
For example, if $P$ with $\theta(P) = D_{4h}$
consists of a cube and a square on the horizontal mirror plane,
the robots forming a cube shrink to translate the cube 
to a long square prism. 

For the second and the third cases,
we select the minimum index among the elements satisfying the condition 
and move each $p \in P_i$ ($P_j$, respectively)
along the line $\overline{p b(P)}$. 
Since we select the minimum index,
this movement introduces no collision.
Additionally, in the third case, 
the robots of $P_j$ move on the horizontal mirror plane
toward $b(P)$.

\subsection{Symmetry Breaking}

We consider a Type 1 configuration $P$.
Let $\{P_1, P_2, \ldots, P_m\}$ the the $\theta(P)$-decomposition of
$P$. 
The preparation step guarantees that
$P_1$ forms one of the following four polyhedra;  
a regular tetrahedron, 
a regular octahedron, 
a regular dodecahedron, and 
an icosidodecahedron. 
Then the proposed plane formation algorithm first makes the
robots execute the {\em go-to-center algorithm} 
(Algorithm~\ref{alg:go-to-center})
proposed in \cite{YUKY16}. 
Each robot of $P_1$ selects an adjacent face of the
polyhedron and moves to the center of the selected face.
But it stops $\epsilon$ before the center to avoid collisions. 

Algorithm~\ref{alg:go-to-center} does not depend on chirality
among the robots and it has been shown that
the rotation group of a resulting
configuration is always a 2D rotation group~\cite{YUKY16}.
Since our robots lack chirality, 
we should consider the combination of such 2D rotation groups
and mirror planes. 
The following lemma guarantees that any resulting configuration
does not have any horizontal mirror plane (except $C_{2h}$).
We note that we do not have to care for rotation reflections
in this phase because such configurations are
Type 3 configurations.

\begin{algorithm}[t]
\caption{Symmetry breaking algorithm for robot $r_i$~\cite{YUKY16}}
\label{alg:go-to-center}
\begin{tabbing}
xxx \= xxx \= xxx \= xxx \= xxx \= xxx \= xxx \= xxx \= xxx \= xxx
\kill 
{\bf Notation} \\ 
 \> $P$: Current configuration observed in $Z_i$. \\
 \> $p_i \in P$: The position of $r_i$ (i.e., the origin of $Z_i$). \\
 \> $\{P_1, P_2, \ldots, P_m\}$: $\theta(P)$-decomposition of $P$, where
 $P_1$ forms one of the four \\
 \> \> \> \> \> polyhedra. \\
 \> $\epsilon = \ell /100$, where $\ell$ is the length of an edge
 of the polyhedron that $P_1$ forms.  \\ 
 {\bf Algorithm}  \\ 
\> {\bf If} $p_i \in P_1$ {\bf then} \\
\> \> {\bf If} $P_1$ is an icosidodecahedron {\bf then}\\ 
\> \> \> Select an adjacent regular pentagon face of $P_1$. \\ 
\> \> \> Destination $d$ is the point $\epsilon$ before the center 
 of the face on the line \\
\> \> \> from $p_i$ to the center. \\ 
\> \> {\bf Else} \\ 
\> \> \> // $P_1$ is a tetrahedron, a octahedron, or a dodecahedron. \\ 
\> \> \> Select an adjacent face of $P_1$. \\ 
\> \> \> Destination $d$ is the point $\epsilon$ before the center 
 of the face on the line \\
\> \> \> from $p_i$ to the center. \\ 
\> \> {\bf Endif} \\ 
\> \> Move to $d$. \\  
\> {\bf Endif} 
\end{tabbing}
\end{algorithm}

\begin{lemma} 
\label{lemma:show-symm-single} 
Let $P$ be a configuration such that $\gamma(P)$
is a 3D rotation group and $\varrho(P)$ contains neither 
any 3D rotation group nor any symmetry group with
a horizontal mirror plane (except $C_{2h}$).
Let $\{P_1, P_2, \ldots, P_m\}$ be the $\theta(P)$-decomposition
of $P$.
 Then there exists one element of $P_i$ ($1 \leq i \leq m$)
 that forms one of the following polyhedra; 
a regular tetrahedron, a regular octahedron, 
a regular dodecahedron, or an icosidodecahedron.
We further assume that $P_1$ forms one of the above
polyhedra.
Then one step execution of Algorithm~\ref{alg:go-to-center} 
translates $P$ into another configuration $P'$ that satisfies
 (i) $\gamma(P')$ is a 2D rotation group, and 
 (ii) if $\gamma(P') \neq C_1, C_2$, 
 $\theta(P')$ does not have any horizontal mirror plane. 
\end{lemma}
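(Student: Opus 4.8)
I would separate the statement into a group-theoretic existence part and a geometric part describing the effect of Algorithm~\ref{alg:go-to-center}, and I would reuse \cite{YUKY16} for everything that is chirality-independent. For the existence of a good polyhedral orbit, the plan is to exploit that $\gamma(P)=G_0\in\{T,O,I\}$ while $G_0\notin\varrho(P)$: if every element of the $\theta(P)$-decomposition were a free $|G_0|$-set, then the $G_0$-decomposition of $P$ would consist entirely of $|G_0|$-sets and $G_0$ itself would lie in $\varrho(P)$, contradicting the hypothesis. Hence some point of $P$ sits on a rotation axis of $G_0$, and its $\theta(P)$-orbit is a transitive polyhedron. By the standard classification of point stabilizers of $T,O,I$ this orbit is one of exactly seven shapes: for $T$ a regular tetrahedron (a $3$-fold axis orbit) or a regular octahedron (a $2$-fold axis orbit); for $O$ an octahedron, a cube, or a cuboctahedron; for $I$ an icosahedron, a regular dodecahedron, or an icosidodecahedron.

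\textbf{Eliminating the reflection-symmetric exceptions.} It remains to show that at least one axis-orbit is one of the four listed polyhedra. For $G_0=T$ this is immediate, since both admissible axis-orbits are already good. For $G_0=O,I$ I would argue by contradiction: if every axis-orbit were a cube, a cuboctahedron, or an icosahedron, I would exhibit a forbidden group inside $\varrho(P)$. The key observations (consistent with Table~\ref{table:sgs-regular}) are that a cuboctahedron and an icosahedron are each a single \emph{free} $T$-orbit, hence $|T|$-sets, while a cube is both a $C_{4h}$- and a $D_{2h}$-set; moreover a full $|G_0|$-orbit splits into free $T$-sets (resp.\ free $C_{4h}$-sets) for the appropriate embedding. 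When a single species of bad orbit occurs together with full orbits, the corresponding candidate ($T$ for the cuboctahedron and icosahedron cases, and additionally $C_{4h},D_{2h}$ for the cube case) decomposes \emph{every} orbit into free sets and therefore belongs to $\varrho(P)$ — a $3$D rotation group or a forbidden horizontal-mirror group, contradicting the hypothesis.

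\textbf{Effect of go-to-center.} Taking $P_1$ to be a good polyhedron, property (i) — that $\gamma(P')$ is a $2$D rotation group — is exactly the content proved in \cite{YUKY16}, since Algorithm~\ref{alg:go-to-center} does not use chirality; I would simply invoke it. The genuinely new claim is property (ii). I would assume toward a contradiction that $\gamma(P')=C_k$ ($k\ge 3$) or $D_\ell$ ($\ell\ge 2$) with principal axis $a$, and that $\theta(P')$ contains a horizontal mirror plane $\sigma_h\perp a$, so that reflection in $\sigma_h$ fixes the destination set. The plan is to record each robot's destination as a pair (signed height along $a$, azimuth about $a$). By Lemma~\ref{lemma:ordering} all robots of $P_1$ apply the same local rule, so within an orbit the chosen adjacent faces are related by the maintained symmetry and carry a \emph{common nonzero azimuthal offset} from their source vertices. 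Since $\sigma_h$ sends height $z\mapsto -z$ while preserving azimuth, $\sigma_h$-invariance of the destinations would force this consistent angular twist to be reversible under $z\mapsto -z$, which happens only in the degenerate cases $\gamma(P')=C_1,C_2$; otherwise we obtain the contradiction, leaving $C_{2h}$ as the only surviving horizontal mirror.

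\textbf{Main obstacle.} I expect property (ii) to be the crux, because it is exactly where the absence of chirality matters and where \cite{YUKY16} gives no leverage: the delicate work is to pin down the emergent principal axis $a$ as a function of the face-selection rule and to verify, uniformly over all four good polyhedra and every possible emergent axis, that the edge-length–scaled step toward an adjacent face destroys each horizontal mirror. Within the existence part, the analogous danger is the coexistence of a cube orbit with a chiral (reflection-free) orbit under $G_0=O$, where the cube breaks $T$ and the chiral orbit breaks $C_{4h}$ and $D_{2h}$, so no single forbidden group survives; I would handle this by appealing to the full Type~1 classification of \cite{YUKY16} rather than to one candidate group, and by using that the chiral orbit already removes the horizontal mirror, which is precisely what property (ii) must then certify.
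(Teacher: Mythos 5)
Your existence argument takes the paper's route (some robot must lie on a rotation axis of $\gamma(P)$, the axis-orbits are among seven polyhedra, and the ``bad'' ones should force a forbidden group into $\varrho(P)$), and your instinct that only the single-species cases are safe is exactly right --- but your repair of the mixed case fails, and in fact no repair of this step can work from the stated hypothesis. Take $P$ to be a cube together with a concentric cuboctahedron, so that $\gamma(P)=O$, $\theta(P)=O_h$, and the $\theta(P)$-decomposition consists of exactly these two orbits. Since any $G\in\varrho(P)$ embeds into $\theta(P)$ and therefore decomposes each of the two orbits separately into $|G|$-sets, $|G|$ divides $\gcd(8,12)=4$; hence no 3D rotation group, no $C_{kh}$ with $k\geq 3$, and no $D_{\ell h}$ can lie in $\varrho(P)$ (the only surviving ``mirror-type'' group, $S_4$, is explicitly permitted by Theorem~\ref{theorem:main}). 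So the hypothesis of the lemma holds, yet no element of the decomposition is one of the four good polyhedra. The same collapse occurs in your cube-plus-chiral-orbit scenario, and also for a cube plus a free $O$-orbit lying on the mirror planes of $O_h$ (e.g.\ the orbit of $(a,0,c)$), which moreover falsifies your auxiliary claim that a full $|O|$-orbit always splits into free $C_{4h}$-sets: the points of such an orbit on the horizontal mirror have nontrivial stabilizer in $C_{4h}$. Your proposed fix --- appealing to the Type~1 classification of \cite{YUKY16} --- cannot close this, because that classification concerns robots with chirality and counts the cube among its good polyhedra; the cube is exactly the orbit that must be excluded or handled here, and neither you nor the paper analyzes Algorithm~\ref{alg:go-to-center} on a cube. (You did correctly spot a step the paper's own proof glosses over --- it asserts that a cube ``allows $C_{4h}$ to remain in symmetricity,'' which is true only if nothing else in $P$ breaks $C_{4h}$ --- but flagging the obstacle is not resolving it.)

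For property (ii) your azimuthal-twist argument is unsound, and it breaks precisely because of the feature the paper studies: lack of chirality. You claim that within an orbit the chosen faces ``carry a common nonzero azimuthal offset'' because the choices are related by the maintained symmetry. But the maintained symmetry $\sigma(P,Q)$ may contain reflections, and two robots related by a vertical mirror make mirror-image choices whose offsets have opposite signs; the paper's remark at the end of the symmetry-breaking subsection --- six robots on an octahedron whose trajectories lie in the vertical mirror planes of an embedded $D_{3v}$, producing a triangular anti-prism with $\theta(P')=D_{3v}$ --- realizes exactly this, so if your argument were valid it would also ``exclude'' an outcome that actually occurs. What really eliminates horizontal mirrors is the geometry of the candidate destination sets, and that is what the paper establishes in a four-way case analysis (Table~\ref{table:four-rots}, Figs.~\ref{fig:dest-candidates}--\ref{fig:triangle-ticosa}): every regular triangle or pentagon among the candidates is centered on a rotation axis and no two of them combine into a prism; no candidate point lies on the relevant rotation axes; $P_1'$ is never coplanar; and, in the dodecahedron case, the mirror plane of any candidate cuboid would have to contain four vertices of the original icosahedron, which have been split apart. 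Your sketch engages with none of this, and it cannot dispose of the two sub-cases outside your framework: the coplanar case (all heights zero, so twist bookkeeping is vacuous --- you dismiss it as ``the degenerate cases $C_1,C_2$'' without proof) and the $D_{2h}$ case arising from the dodecahedron, where there is no high-fold principal axis and the obstruction is the cuboid analysis, not azimuth.
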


\begin{proof}
 To show the first property of the lemma, 
 we introduce another technique to check
 transitive set of points regarding a symmetry group.\footnote{
 This is an extension of the same technique for rotation groups
 shown in \cite{YUKY16}.}
 Given an arrangement of $G \in {\mathbb S}$ and a seed point $s$
 in an arrangement of $G$,
 by applying all elements of $G$ to $s$, we obtain an 
 orbit of $S = \{g*s ~|~ g \in G\}$.
 Clearly, $S$ is transitive regarding $G$ and the location of $s$
 determines the size of $S$.
 For example,
 when $s$ is on a $k$-fold rotation axis of $G$, $|S| = |G|/k$, 
 when $s$ is on a mirror plane of $G$ but not on a
 rotation axis of $G$, $|S| = |G|/2$, and
 when $s$ is neither on any mirror plane nor on any rotation axis,
 $|S| = |G|$. 

 Since $\gamma(P)$ is a 3D rotation group and
 $\varrho(P)$ does not contain any 3D rotation group,
 there are some robots on the rotation axes of $\gamma(P)$.
 A seed point on a rotation axes of a 3D rotation group
 produces a regular tetrahedron,
 a regular octahedron, a cube, a cuboctahedron, 
 a regular dodecahedron, a regular icosahedron, or 
 an icosidodecahedron.
 However, since $T \prec O$ and $T \prec I$,
 a cuboctahedron and an icosahedron 
 allow $T$ to remain in symmetricity.
 Additionally, a cube allows $C_{4h}$ to remain in symmetricity, and 
 the plane formation is not possible from a
 cubic initial configuration. 
 We have the first property of the lemma. 
  
 Assume that the robots of $P_1$ occupy the points
 $P_1' \subseteq P'$.
 It suffices to show that $\theta(P_1')$ does not have any
 horizontal mirror plane, since 
 $\theta(P')$ is a subgroup of $\theta(P_1')$. 
  
 We first check the rotation group of 
 any resulting configuration of Algorithm~\ref{alg:go-to-center} and
 then we proceed to the combinations of rotation axes and mirror planes. 
 In~\cite{YUKY16}, it has been shown that 
 after one step execution of Algorithm~\ref{alg:go-to-center},
 the rotation group of any resulting configuration is one of the
 rotation groups shown in Table~\ref{table:four-rots}.
 Hence $\theta(P')$ is a composition of these rotation symmetry and
 reflection symmetry. 

 \begin{table}[t]
   \centering
  \caption{Symmetricity and rotation group of the
  four polyhedra and rotation groups after 
  the execution of Algorithm~\ref{alg:go-to-center}}
 \label{table:four-rots}
 \centering
  \begin{tabular}{|l|c|c|c|l|}
   \hline
   Polyhedron of $P$ & $|P|$ & $\gamma(P)$ & $\varrho(P)$ & Candidates of $\gamma(P')$ \\
   \hline
   Regular tetrahedron & 4 & $T$ & $\{D_2, S_4\}$ & $D_2, C_2, C_1$ \\
   \hline
   Regular octahedron & 6 & $O$ & $\{D_3, S_6\}$ & $D_3, C_3, C_1$ \\
   \hline
   Regular dodecahedron & 20 & $I$ & $\{D_5, D_2, S_{10}\}$ &
		   $D_5, D_2, C_5, C_2, C_1$ \\
   \hline
   Icosidodecahedron & 30 & $I$ & $\{S_{10}, S_6\}$ & $C_5, C_3, C_1$ \\
   \hline
  \end{tabular}
 \end{table}

\begin{figure}[t]
 \centering
\subfloat[]{\includegraphics[height=3cm]{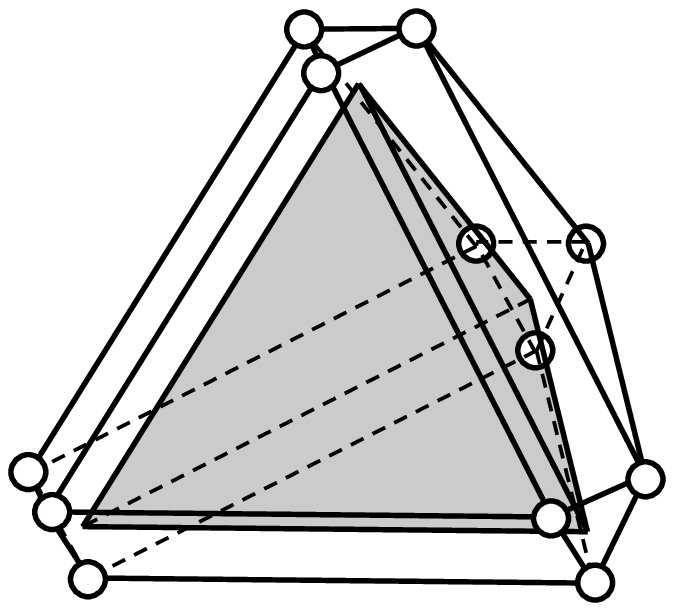}\label{fig:expanded-tetra}}
 \quad 
\subfloat[]{\includegraphics[height=3cm]{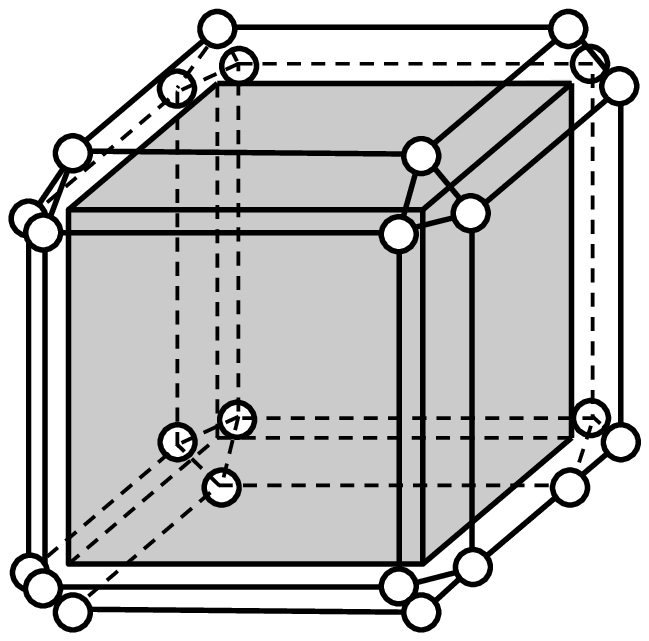}\label{fig:expanded-cube}}
 \\
\subfloat[]{\includegraphics[height=3cm]{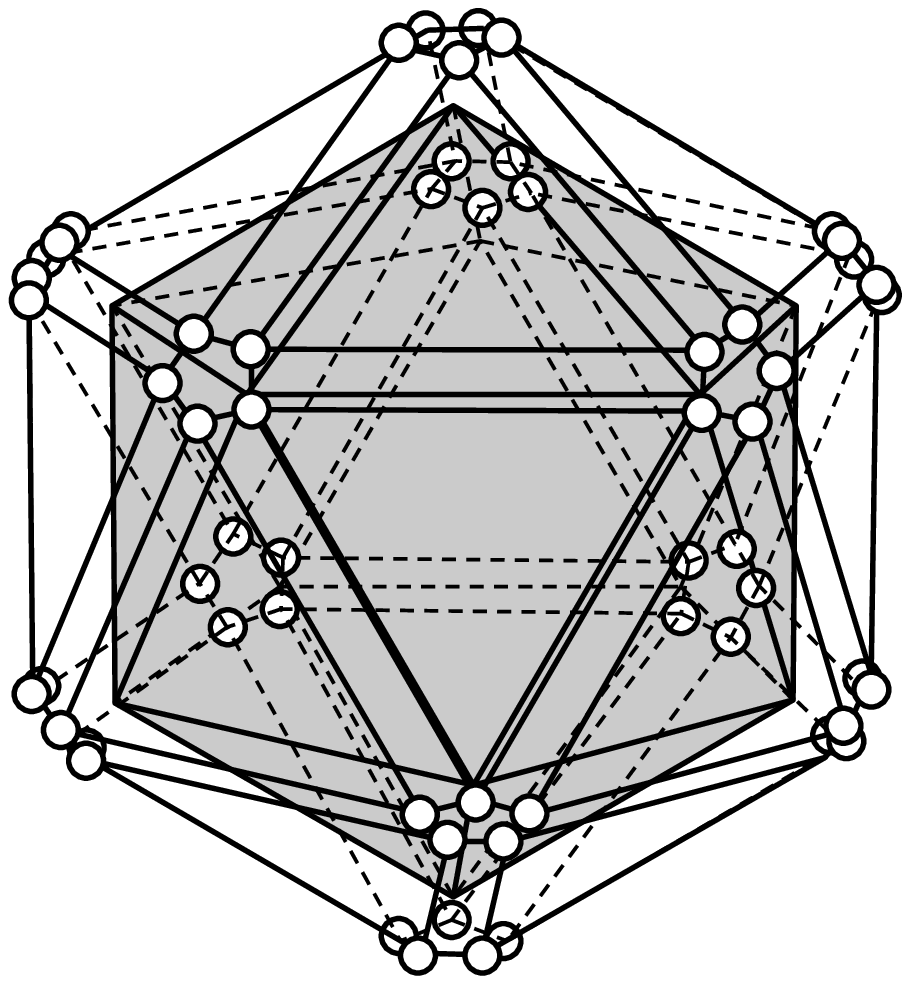}\label{fig:expanded-icosa}}
 \quad 
\subfloat[]{\includegraphics[height=3cm]{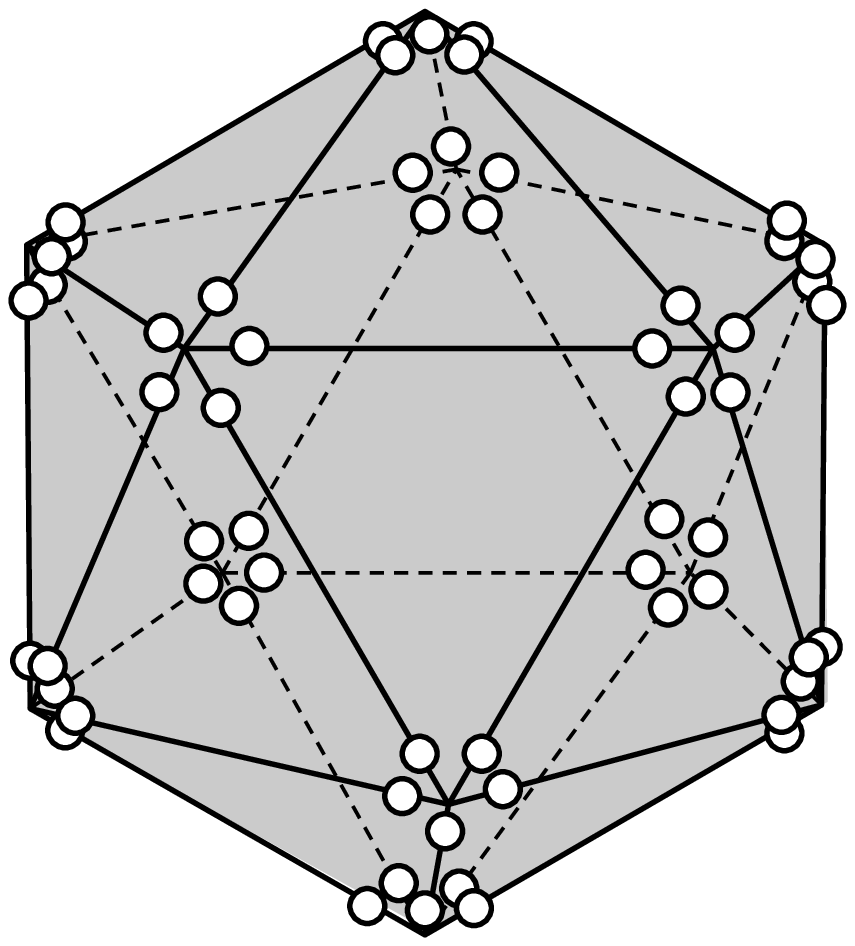}\label{fig:truncated-icosa}}
 \quad 
 \caption{The set of candidate destinations.
 (a) an $\epsilon$-expanded tetrahedron,
 (b) an $\epsilon$-expanded cube,
 (c) an $\epsilon$-expanded icosahedron, and 
 (d) an $\epsilon$-truncated icosahedron. 
 }
\label{fig:dest-candidates}
\end{figure}

 The set of candidate destinations of
 Algorithm~\ref{alg:go-to-center} forms
 the polyhedra shown in Fig.~\ref{fig:dest-candidates}.
 For example, when $P$ forms a regular octahedron,
 possible next positions of the six robots are around the
 centers of the faces, thus, around the vertices of the dual cube.
 Since the robots do not move to the center,
 the candidate destinations form an
 {\em $\epsilon$-expanded cube}, which is obtained by
 expanding the faces of a cube.
 The rotation group of an $\epsilon$-expanded cube 
 is the same as its original polyhedra, i.e., a cube,
 and it is $O$.
 Additionally, its vertices form a
 transitive set regarding $O$. 
 The six robots select a subset of the vertices of this
 $\epsilon$-expanded cube in Algorithm~\ref{alg:go-to-center}. 
 In the same way,
 when $P$ forms a regular tetrahedron,
 the candidate destinations form an
 {\em $\epsilon$-expanded tetrahedron}, 
 when $P$ forms a regular dodecahedron,
 the candidate destinations form an
 {\em $\epsilon$-expanded icosahedron},
 and when $P$ forms an icosidodecahedron,
 the candidate destinations form an
 {\em $\epsilon$-truncated icosahedron}.
 Each of these polyhedra is also transitive (hence spherical)
 regarding the rotation
 group of its original polyhedron. 

 We check the symmetry group of $P_1'$ and depending on $P_1$,
 we have the following four cases.

 \noindent{\bf Case A: When $P_1$ forms a regular tetrahedron.~}
 The set of candidate destinations form an $\epsilon$-expanded
 tetrahedron and $|P_1'| = 4$.
 By Table~\ref{table:four-rots},
 we check whether $\theta(P_1')$ is $D_{2h}$. 

 Assume that $\theta(P_1') = D_{2h}$.
 When the points of $P_1'$ are on the principal axis
 (secondary axes, respectively),
 $P_1'$ is on one plane. 
 When the  points of $P_1'$ are on mirror planes but not on
 any rotation axes, 
 still $P_1'$ is on one plane.
 Otherwise, we have $|P_1'| = |D_{2h}| = 8$,
 and we do not have this case.

 However, since the four robots select one face of a regular
 tetrahedron, $P_1'$ is not on a plane.
 There are following four cases:
 (i) three robots select the same face,
 (ii) two robots select the same face and the remaining two robots
 select another face, 
 (iii) robots are divided into a $2$-set and two $1$-sets
 and the three groups select different faces, and
 (iv) each robot selects different face.
 In any of the four cases, the four robots are not on one plane.
 Hence, we do not have the case where $\theta(P_1')=D_{2h}$. 

 \noindent{\bf Case B: When $P_1$ forms a regular octahedron.~}
 The set of candidate destinations form an $\epsilon$-expanded
 cube and $|P_1'| = 6$.
 By Table~\ref{table:four-rots},
 we check whether $\theta(P_1')$ is $D_{3h}$ or $C_{3h}$. 

 We first show that $P_1'$ is not on a plane. 
 The candidate destinations of one $p \in P_1$ forms a square 
 face of an $\epsilon$-expanded cube. 
 If $P_1'$ is on one plane, say $H$,
 $H$ contains at least one vertex of each square face of
 an $\epsilon$-expanded cube. 
 Clearly, such $H$ does not exist. 

 Assume that $\theta(P_1') = D_{3h}$.
 Thus the points of $P_1'$ are on some mirror planes,
 otherwise we have $|P_1'| = 12$.
 Since $P_1'$ is not on one plane, $P_1'$ forms a triangular prism.
 As Fig.~\ref{fig:triangle-ecube} shows, 
 any regular triangle in an $\epsilon$-expanded cube is centered
 at a point on a $3$-fold rotation axis. 
 Additionally, no combination of these triangles form
 a triangular prism. 
 Hence, we have $\theta(P_1') \neq D_{3h}$. 

\begin{figure}[t]
 \centering
 \includegraphics[height=3cm]{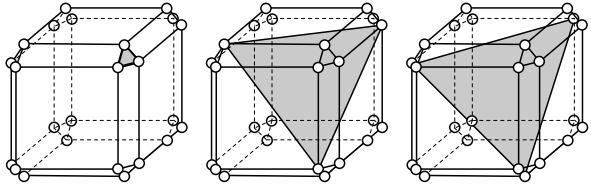}
 \caption{Triangles in an $\epsilon$-expanded cube} 
\label{fig:triangle-ecube}
\end{figure}

 Assume that $\theta(P_1') = C_{3h}$.
 Since $P_1'$ is not on one plane, $P_1'$ is not on the
 horizontal mirror plane of $C_{3h}$ and it forms a triangular prism.
 In the same way as the above discussion, we do not have this case.

 \noindent{\bf Case C: When $P_1$ forms a regular dodecahedron.~}
 The set of candidate destinations form an $\epsilon$-expanded
 icosahedron and $|P_1'| = 20$.
 By Table~\ref{table:four-rots},
 we check whether $\theta(P_1')$ is $D_{5h}$, $D_{2h}$, or $C_{5h}$. 

 We first show that $P_1'$ is not on a plane.
 The candidate destinations of one $p \in P_1$ forms a regular
 triangle face of an $\epsilon$-expanded icosahedron.
 If $P_1'$ is on one plane, say $H$,
 $H$ contains at least one vertex of each regular triangle face of
 an $\epsilon$-expanded icosahedron.
 Clearly, such $H$ does not exist. 
 
 Assume that $\theta(P_1') = D_{5h}$.
 Since $P_1'$ is not on one plane,
 we have the following two cases:
 (a) $P_1'$ contains a pentagonal prism (size $10$), and
 (b) $P_1'$ contains a transitive $20$-set regarding $D_{5h}$,
 (c) $P_1'$ contains a set of points on the principal
 rotation axis of $D_{5h}$. 
 As Fig.~\ref{fig:pentagon-eicosa} shows, 
 any regular pentagon in an $\epsilon$-expanded icosahedron is centered
 at a point on a $5$-fold rotation axis. 
 Additionally, no combination of these pentagons form
 a pentagonal prism. 
 Hence, we do not have case (a).

\begin{figure}[t]
 \centering
 \includegraphics[width=12cm]{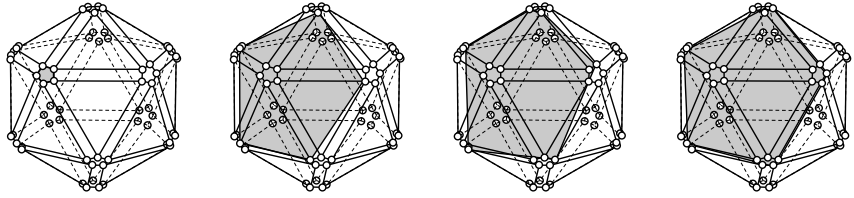}
 \caption{Pentagons in an $\epsilon$-expanded icosahedron} 
\label{fig:pentagon-eicosa}
\end{figure}

 Any transitive $20$-set regarding $D_{5h}$ consists of
 two pentagonal prisms.
 From the above discussion, we do not have case (b).
 
 As sown in Fig.~\ref{fig:pentagon-eicosa},
 any pentagon in an $\epsilon$-expanded icosahedron 
 has no point above its rotation axis because
 there is no point of an $\epsilon$-expanded icosahedron
 on its $5$-fold rotation axis.
 Thus we do not have case (c) and we have $\theta(P_1') \neq D_{5h}$.
 
 Assume that $\theta(P_1') = C_{5h}$.
 Since $P_1'$ is not on one plane,
 we have the following two cases:
 (d) $P_1'$ contains a pentagonal prism (size $10$), and
 (e) $P_1'$ contains a set of points on the principal axis.
 In the same way as above discussion, we have $\theta(P_1') \neq C_{5h}$.

 Assume that $\theta(P_1') = D_{2h}$.
 The size of a transitive set of points regarding $D_{2h}$ is
 either $8$, $4$ (on a mirror plane), or $2$ (on a rotation axis). 
 Since $P_1'$ is not on one plane,
 $P_1'$ does not consist of transitive $4$-sets.
 When $P_1'$ contains a transitive $2$-set,
 the number of transitive $2$-sets is greater than one
 because $|P_1'|=20$.
 However, since an $\epsilon$-expanded icosahedron is spherical,
 at most two points of it are on a line.
 Thus $P_1'$ should contain a transitive $4$-set because of its size.

 Fig.~\ref{fig:cuboid-eicosa} shows all possible cuboids
 in an $\epsilon$-expanded  icosahedron.
 Then their mirror planes contain a $2$-fold axis and
 two $5$-fold axes, in other words,
 four vertices of the original icosahedron. 
 Since a vertex of a regular icosahedron is broken into
 five points in an $\epsilon$-expanded icosahedron,
 there is no rectangle containing the mirror plane of
 any of such cuboids.
 Hence, we have $\theta(P_1') \neq D_{2h}$. 

\begin{figure}[t]
 \centering
 \includegraphics[width=12cm]{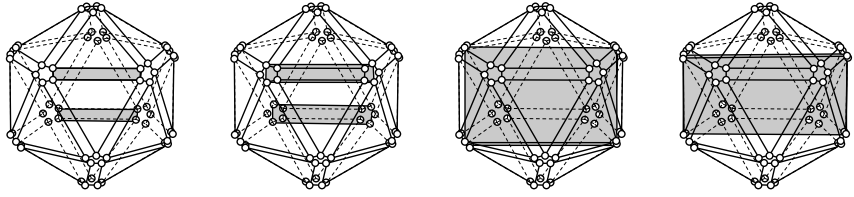}
 \caption{Cuboids in an $\epsilon$-expanded icosahedron.
 The figures show two parallel faces of each possible cuboid. } 
\label{fig:cuboid-eicosa}
\end{figure}

 \noindent{\bf Case D: When $P_1$ forms a regular icosidodecahedron.~}
 The set of candidate destinations form an $\epsilon$-truncated 
 icosahedron and $|P_1'| = 30$.
 By Table~\ref{table:four-rots},
 we check whether $\theta(P_1')$ is $C_{5h}$ or $C_{3h}$.  

 We first show that $P_1'$ is not on a plane.
 The candidate destinations of one $p \in P_1$ forms an edge 
 of an $\epsilon$-truncated icosahedron.
 If $P_1'$ is on one plane, say $H$,
 $H$ contains at least one endpoint of each edge of
 an $\epsilon$-expanded icosahedron.
 Clearly, such $H$ does not exist. 

 Assume that $\theta(P_1') = C_{5h}$.
 Since $P_1'$ is not on one plane,
 we have one of the following two cases:
 (a) $P_1'$ contains a pentagonal prism, or (b) $P_1'$ contains a
 set of points on the rotation axis of $C_{5h}$.
 As Fig.~\ref{fig:pentagon-ticosa} shows, 
 any regular pentagon in an $\epsilon$-truncated icosahedron is centered
 at a point on a $5$-fold rotation axis. 
 Additionally, no combination of these pentagons form
 a pentagonal prism. 
 Hence, we do not have case (a).

\begin{figure}[t]
 \centering
 \includegraphics[width=12cm]{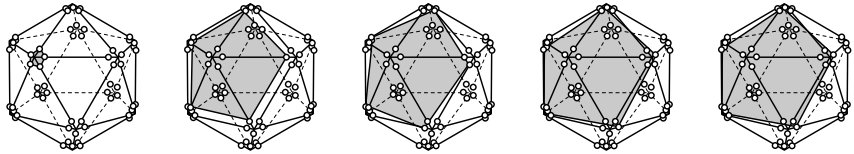}
 \caption{Pentagons in an $\epsilon$-truncated icosahedron} 
\label{fig:pentagon-ticosa}
\end{figure}

 As sown in Fig.~\ref{fig:pentagon-ticosa},
 any pentagon in an $\epsilon$-truncated icosahedron 
 has no point above its rotation axis because
 there is no point of an $\epsilon$-truncated icosahedron
 on its $5$-fold rotation axis.
 Thus we do not have case (b) and we have $\theta(P_1') \neq C_{5h}$.

 Assume that $\theta(P_1') = C_{3h}$.
 Since $P_1'$ is not on one plane,
 we have one of the following two cases:
 (c) $P_1'$ contains a triangular prism, or (d) $P_1'$ contains a
 set of points on the rotation axis of $C_{3h}$.
 As Fig.~\ref{fig:triangle-ticosa} shows, 
 any regular triangle in an $\epsilon$-truncated icosahedron is centered
 at a point on a $3$-fold rotation axis. 
 Additionally, no combination of these triangles form
 a triangular prism. 
 Hence, we do not have case (c).

\begin{figure}[t]
 \centering
 \includegraphics[width=12cm]{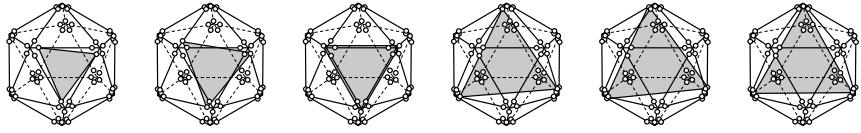}
 \caption{Triangles in an $\epsilon$-truncated icosahedron} 
\label{fig:triangle-ticosa}
\end{figure}

 As sown in Fig.~\ref{fig:pentagon-ticosa},
 any regular triangle in an $\epsilon$-truncated icosahedron 
 has no point above its rotation axis because
 there is no point of an $\epsilon$-truncated icosahedron
 on its $3$-fold rotation axis.
 Thus we do not have case (d) and we have $\theta(P_1') \neq C_{3h}$.
 
 From the above four cases, we conclude that 
 $\theta(P_1')$ is a 2D rotation group and
 if $\theta(P_1')$ have at least one rotation axis,
 it does not have a horizontal mirror plane (except $C_{2h}$). 
 Since $\theta(P')$ is a subgroup of $\theta(P_1')$,
 we have the lemma. 
 \qed 
\end{proof}

We finally note that the robots cannot remove 
vertical mirror planes of $P$ with the go-to-center algorithm. 
For example, consider an initial configuration $P$
where the robots form a regular octahedron.
Thus $\theta(P) = O_h$.
Consider an embedding of $D_{3 v}$ in $\theta(P)$. 
The $3$-fold rotation axis of $D_{3v}$ overlaps a 
$3$-fold rotation axis of $O_h$ and each vertical
mirror plane of $D_{3v}$ contains two trajectories of the go-to-center
algorithm. Actually, the six robots can take these trajectories and 
the resulting configuration $P'$ forms a
triangular anti-prism (thus $\theta(P')=D_{3v}$).

\subsection{Landing Algorithm}
\label{subsec:landing}

In this section, we show a plane formation algorithm
for Type 2 and Type 3 initial configurations. 
When $\gamma(P)$ of a current configuration is a
cyclic group or a dihedral group, 
our basic strategy is to make the robots agree on the
plane perpendicular to the principal axis
and containing $b(P)$ and then we send the robots to
the plane. 
Each robot moves along a perpendicular to the plane. 
To avoid multiplicities, we need some tricks for the
following two cases: 
First, when $\theta(P)$ has a horizontal mirror plane, 
the condition of Theorem~\ref{theorem:main} guarantees that
there is at least one element of the $\theta(P)$-decomposition
of $P$ on it.
To remove this mirror plane,
we first make the robots of such an element leave their
current positions (Fig.~\ref{fig:intro-remove-mirror}). 
The other case is when $\gamma(P)$ is a dihedral group
and some element, say $P_i$,
of the $\theta(P)$-decomposition of $P$
has a horizontal mirror plane. 
Since the target plane is this mirror plane, 
the final destination of any symmetric two robots are the same. 
However, Theorem~\ref{theorem:main} guarantees that
there exists at least one element, say $P_j$,
of the $\theta(P)$-decomposition
of $P$ that does not have a horizontal mirror plane.
The robots of $P_i$ use $P_j$ to break their symmetric 
landing points (Fig.~\ref{fig:intro-remove-col}). 

\begin{figure}[t]
 \centering
\subfloat[]{\includegraphics[height=4cm]{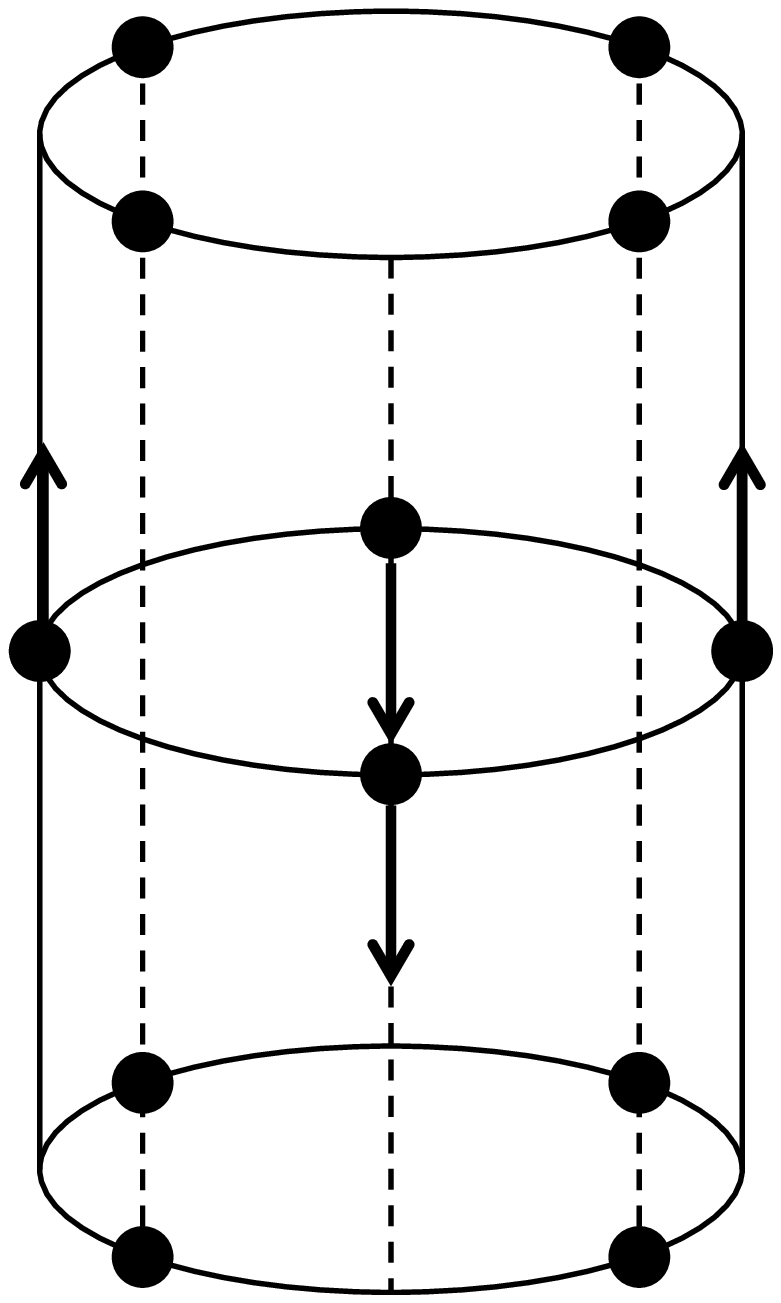}\label{fig:intro-remove-mirror}}
 \quad 
\subfloat[]{\includegraphics[height=4cm]{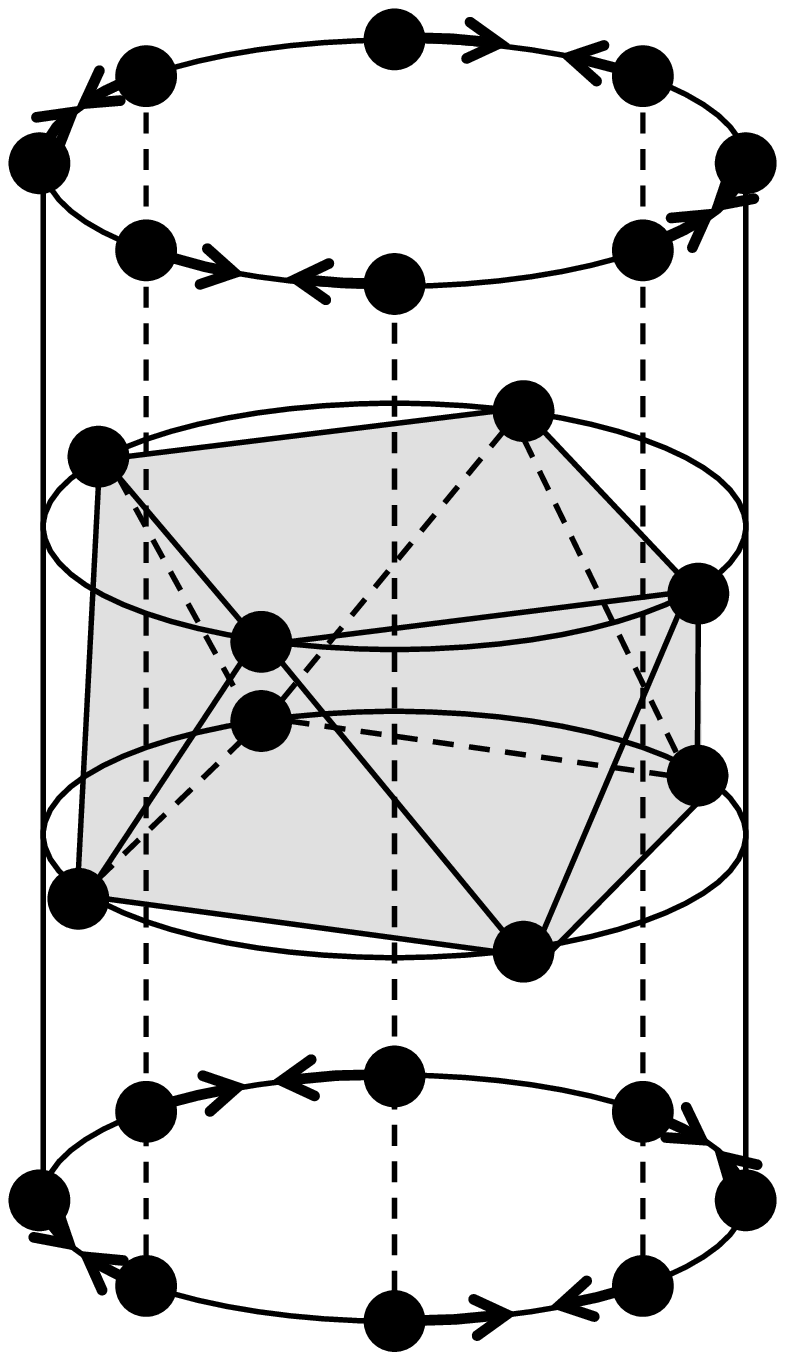}\label{fig:intro-remove-col}}
 \caption{Removing horizontal mirror plane and resolving collisions.
 (a) The robots on the horizontal mirror plane move vertically,
 and removes the plane.
 (b) The robots forming a prism rotates by using an anti-prism. }
\label{fig:intro-remove}
\end{figure}

The proposed algorithm consists of five phases.
The first three phases break the mirror plane of $\theta(P)$ and
resolves the collisions on the target plane. 
The fourth phase makes the robots agree on the
target plane and in the fifth phase
each robot computes the destinations of all robots
to avoid any collision. 
The fourth and the fifth phases are done in
local computation at each robot. 
Finally, the robots move to their final destinations 
in the same cycle.
In any configuration $P$, the robots execute the algorithm
for the smallest phase number. 
The robots can easily agree on which phase to execute
because the condition for each of the five phases
divide the set of all configurations with 2D rotation groups
into disjoint subsets. 
Depending on the execution, some phases may be skipped.
Since the proposed algorithm is designed for the oblivious robots,
it is always described for a current configuration.

\subsubsection{First Phase: Removing the Mirror Plane}

By the condition of Theorem~\ref{theorem:main}, 
when $\gamma(P)$ is a cyclic group or a dihedral group and 
has a horizontal mirror plane,
the mirror plane contains some robots. 
The preparation step guarantees that this element is $P_1$
of the $\theta(P)$-decomposition of $P$.
Let $k$ be the folding of the principal axis of $\theta(P)$. 
Intuitively, the first phase makes the robots select the upward
direction or the downward direction regarding this mirror plane
and the robots move to the selected directions.
Any resulting configuration does not have the horizontal mirror plane
any more because for each new positions of the robots, 
there is no corresponding point regarding the horizontal mirror plane. 
However, for the simplicity of the correctness proof,
these next positions are selected more carefully. 

The robots consider a fictitious prism with 
a regular $k$-gon base inscribed in $I(P)$,
that share the horizontal mirror plane (Fig.~\ref{fig:fict-prism}).
However, the size is selected so that
the length of the edge of the regular $k$-gon base 
is one tenth of the length of its side edge, and
its arrangement is determined so that 
the plane formed by $b(P)$ and a side edge contains
a point of $P_1$. 
Then, each $p \in P_1$ moves toward one of the nearest
vertex of this fictitious prism.

\begin{figure}[t]
 \centering
 \includegraphics[height=3cm]{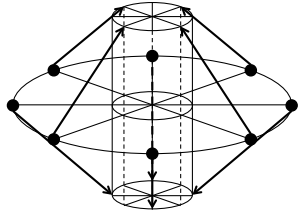}
 \caption{Fictitious prism for the first phase.} 
\label{fig:fict-prism}
\end{figure}

\begin{lemma}
 \label{lemma:remove-mirror}
 Let $P$ be a configuration such that $\gamma(P)$
 is a 2D rotation group ($\neq C_1, C_2$) and some robots are
 one the horizontal mirror plane of $\theta(P)$. 
 In this configuration, the robots execute the first phase,
 and a new configuration $P'$ yields. 
 Then $P'$ satisfies one of the following conditions:
 (a) $\gamma(P')=C_1$,
 (b) $\gamma(P')= C_2$, or 
 (c) $\theta(P')$ does not have any horizontal mirror plane. 
\end{lemma}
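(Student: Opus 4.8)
The plan is to show that the first phase moves the element $P_1$ lying on the horizontal mirror plane off of it in a way that destroys that plane, while keeping the principal axis of $\gamma(P)$. First I would record two invariants that come for free. Since the fictitious prism is centered at $b(P)$ and built from $\gamma(P)$-consistent data, the movement preserves $b(P)$ and the principal axis, so $b(P')=b(P)$ and $\gamma(P')$ is again a cyclic or dihedral group about the same axis (or one of $C_1,C_2$). Consequently, if $P'$ had a horizontal mirror plane, it would have to be the original plane $H_0$ through $b(P)$ perpendicular to the principal axis. This reduces the lemma to a single claim: $H_0$ is not a mirror plane of $P'$, or else $\gamma(P')\in\{C_1,C_2\}$.

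Next I would localize the obstruction to $P_1$. The remaining elements $P_2,\ldots,P_m$ of the $\theta(P)$-decomposition are each invariant under $\theta(P)$, hence under reflection across $H_0$, and they do not move in the first phase. Therefore $P'$ is symmetric across $H_0$ precisely when the new positions $P_1'$ of the robots of $P_1$ form an $H_0$-symmetric set. So it suffices to show $P_1'$ is not symmetric across $H_0$.

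Then I would use the geometry of the fictitious prism (Fig.~\ref{fig:fict-prism}). Because the prism is tall and thin (base edge one tenth of the side edge) and its side edges are aligned with the radial directions of the points of $P_1$, the two vertices nearest to a point $p\in P_1\subseteq H_0$ are exactly the top and bottom endpoints of the side edge aligned with $p$; these two endpoints are interchanged by reflection across $H_0$ and sit at the same angular position. Each robot of $P_1$ moves to exactly one of them, hence strictly off $H_0$. In the cyclic case, and more generally whenever each side edge is aligned with a single point of $P_1$, this already finishes the argument: at every angular position at most one of the two $H_0$-related endpoints is occupied, so $P_1'$ contains no reflected pair across $H_0$ and $H_0$ is destroyed. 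If the up/down choices happen to leave only the trivial rotation we land in case (a) or (b); otherwise the principal axis survives and we are in case (c).

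The main obstacle is the dihedral case in which two robots of $P_1$ are aligned with the same side edge (this arises when $\gamma(P)=D_\ell$ and $P_1$ is a $2\ell$-point orbit of $D_{\ell h}$ on $H_0$). A priori such a pair, being related by a $2$-fold axis lying in $H_0$, could send one robot to the top endpoint and the other to the bottom endpoint of the shared edge, reconstituting an $H_0$-symmetric pair and, across all sectors, rebuilding a prism with a horizontal mirror. Ruling this out is the crux, and here I would lean on the precise prism dimensions together with the no-multiplicity requirement: I would show that the aligned side edges can be fixed so that the two robots in a common sector have distinct nearest side edges, or that any assignment reproducing $H_0$ would force two robots onto the same vertex (a forbidden multiplicity) or collapse the rotation symmetry to $C_1$ or $C_2$. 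The remaining bookkeeping---verifying that the chosen destinations are pairwise distinct and that no $3$-dimensional rotation group can arise from prism-type vertices together with the unmoved $H_0$-symmetric elements---is routine and I would dispatch it at the end.
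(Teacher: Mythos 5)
Your central mechanism---each robot of $P_1$ leaves the horizontal plane $H_0$ for one of the two endpoints of a side edge, so that $C'=P_1'$ contains no $H_0$-reflected pair---is exactly the paper's key claim, and your localization to $P_1$ (the unmoved elements stay $H_0$-symmetric) matches the paper's use of $\theta(P\setminus P_1)=\theta(P)$. But there are two genuine gaps. First, your opening reduction---``the movement preserves the principal axis, so $\gamma(P')$ is a cyclic or dihedral group about the same axis, hence any horizontal mirror plane of $P'$ would have to be $H_0$''---is asserted, not proved, and as stated it is false: the up/down choices are made adversarially by the local coordinate systems, so $C'$ can acquire \emph{new} $2$-fold axes that were never symmetries of $P$ (for instance, alternating up/down choices around the prism yield an antiprism-like set whose $2$-fold axes bisect the sectors; with $\theta(P)=C_{4h}$ and the unmoved robots on the principal axis, $\gamma(P')$ can become $D_2$ with two brand-new horizontal axes). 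Ruling out that such new axes, together with surviving old ones, assemble into $D_{2h}$ or into some $C_{jh}$ or $D_{jh}$ about a new principal axis is precisely where the paper's proof does its work: its Cases A--D show that a surviving principal axis carries no perpendicular mirror, that any new axis is at most $2$-fold (using the elongated prism shape), and the residual $D_{2h}$ possibility is then killed by intersecting with the symmetries of the unmoved robots, with a separate argument when $\theta(P\setminus P_1)\neq\theta(P)$, i.e., when $P\setminus P_1$ lies on the principal axis. Your closing sentence waves this off as routine bookkeeping about ``$3$-dimensional rotation groups,'' which misidentifies the danger: 3D groups are not the threat; new 2D groups with horizontal mirrors are.

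Second, you explicitly leave your self-identified crux unresolved: the case where two robots of $P_1$ are served by the same side edge and could take opposite endpoints, reconstituting an $H_0$-symmetric pair (or colliding on the same vertex). You only list strategies you ``would lean on.'' To be fair, the paper does not close this either---it flatly asserts that each $c\in C'$ lacks its mirror partner, which is immediate only when $|P_1|$ equals the number $k$ of side edges so that each edge attracts exactly one robot; for a $2k$-point orbit of $D_{kh}$ on $H_0$ the assertion genuinely needs an argument. So on this point your proposal is more candid than the paper, but as a proof it is incomplete exactly where you say the difficulty lies, and incomplete again at the axis analysis that constitutes the bulk of the paper's own proof.
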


\begin{proof} 
 Let $\{P_1, P_2, \ldots, P_m\}$ and $k$
 be the $\theta(P)$-decomposition of $P$ and
 the folding of the principal axis of $\theta(P)$. 
 We denote the vertices of the fictitious prism by $C$.
 Clearly, $\gamma(C) = D_k$. 
 During the transition from $P$ to $P'$, 
 the robots of $P_1$ select a subset $C' \subset C$
 and moves to the selected vertices.
 Other robots of $P \setminus P_1$ do not move.
 Thus clearly the mirror plane of $\theta(P)$ is not a
 mirror plane of $\theta(P')$ (even $\theta(C')$)
 because each $c \in C'$ does not have the corresponding point
 regarding this initial mirror plane. 

 We separately consider the following two cases. 
 First, when $\theta(P\setminus P_1) = \theta(P)$, 
 $\theta(P' \setminus C') = \theta(P)$ and 
 $\theta(P')$ is a subgroup of $\theta(P)$. 
 In other words, the symmetry group of $P$ is kept by
 the robots of $P \setminus P_1$ 
 because these robots do not move during the transition from $P$ to
 $P'$, the symmetry group that acts on them does not change. 
 Since $\theta(P')$ is a subgroup of $\theta(C')$ and
 $\theta(P' \setminus C') = \theta(P)$, 
 we check $\theta(C')$. 

 We first consider rotation axes of $\theta(C')$.
 We have the following four cases:

 \noindent{\bf Case A:~}
 If the principal axis of $\theta(C)$ remains as
 some rotation axis of $\gamma(C')$,
 its folding is a divisor of $k$ because of the
 movement of the robots of $P_1$. 
 Clearly, $\gamma(C')$ has no horizontal mirror plane from the
 above discussion. 

 \noindent{\bf Case B:~}
 If a $2$-fold axis of $\theta(C)$ remains as
 some ration axis of $\gamma(C')$,
 its folding remains two because any
 subset of $C$ that is on a plane perpendicular to
 this $2$-fold axis forms a line or a rectangle because the prism is
 long. Actually, we do not have the case of a square
 because $C'$ is not symmetric regarding the 
 horizontal mirror plane of $C$.

 \noindent{\bf Case C:~}
 If a new rotation axis appears and it has an intersection with the
 top or the base of $C$, 
 it is a $2$-fold axis because any
 subset of $C$ that is on a plane perpendicular to
 this axis forms a line.

 \noindent{\bf Case D:~}
 If a new rotation axis appears and it has an intersection with the
 side face of $C$, 
 it is a $2$-fold axis because any
 subset of $C$ that is on a plane perpendicular to
 this axis forms a line.

 From the above four cases,
 $\theta(C')$ is neither $D_{\ell h}$ nor $C_{\ell h}$ for
 any $\ell > 2$ because
 the possible principal axis do not have any mirror plane
 by Case A.
 The remaining case is $D_{2h}$. 
 The rotation axes of Case A and Case B do not form
 $D_{2h}$ because there is no horizontal mirror plane.
 Even when the rotation axes of Case C and Case D form $D_{2h}$, 
 this $D_{2h}$ does not act on $P' \setminus C'$
 since  $\theta(P' \setminus C') = \theta(P)$.
 Hence, $\theta(P')$ is $C_{2h}$ or does not have any horizontal mirror
 planes. 

 Second, we consider the case where
 $\theta(P \setminus P_1) \neq \theta(P)$.
 From the preparation phase,
 all points of $P$ are on the horizontal mirror plane
 (thus the plane formation is finished),
 or all points of $P$ are on the principal axis.
 In the first case, the assumption
 $\theta(P\setminus P_1) = \theta(P)$
 is used in the combination of the rotation axes of Case C and Case D.
 Thus what we should check is this case.
 When the rotation axes of Case C and Case D form
 $D_{2h}$, it should act on the points on the original
 principal axis of $\theta(P)$.
 However, they do not act on these points since they do
 not overlap these points. 
 Hence, $\theta(P')$ is $C_{2h}$ or does not have any horizontal mirror
 planes.
 \qed
\end{proof}

In the following, we assume that when the current configuration
$P$ has a rotation axis, it does not have any horizontal mirror plane
for the principal axis or
$\theta(P) = C_{2h}$.  

\subsubsection{Second Phase: Collision Avoidance for Dihedral Groups}

When $\gamma(P)$ is a dihedral group, say $D_{\ell}$ or $D_{\ell v}$, 
and the $\theta(P)$-decomposition of $P$ contains a prism,
our basic strategy makes multiplicities. 
Let $P_i$ be the element of the $\theta(P)$-decomposition of $P$
that forms a prism (Fig.~\ref{fig:intro-remove-col}). 
Since $\theta(P)$ is $D_{\ell}$ or $D_{\ell v}$, 
any side edge of $P_i$'s prism contains a $2$-fold axis. 
Remember that the vertical mirror planes of $D_{\ell v}$ do not
contain any $2$-fold axis.
Since $\theta(P)$ does not have a horizontal mirror plane,
there exists at least one element $P_j$
that does not form such a prism.
Let $j$ be the minimum index among such elements.
Then for each robot $p \in P_i$,
the nearest point of $P_j$ is uniquely determined.
If there are nearest two points of $P_j$ for
$p \in P_i$, then $P_j$ is not transitive regarding $\theta(P)$
as shown in
Fig.~\ref{fig:nearest-points}: 
When $\theta(P)=D_{\ell v}$,
if $P_j$ is not on any mirror plane of $\theta(P)$ and
each $p \in P_i$ has two nearest points of $P_j$, 
then the mirror planes produce $8 \ell$ points,
which means $P_j$ is not transitive (Fig.~\ref{fig:Dv-1}).
Otherwise, the nearest points are on the mirror planes
of $\theta(P)$ and $|P_j|$ should be $2 \ell$,
but the $2$-fold axes produces $4 \ell$ points,
a contradiction (Fig.~\ref{fig:Dv-2}).
We can show the property for the case of $\theta(P) = D_{\ell}$
in the same argument (Fig.~\ref{fig:D-1} and \ref{fig:D-2}). 

\begin{figure}[t]
 \centering
\subfloat[$D_{3v}$]{\includegraphics[height=2cm]{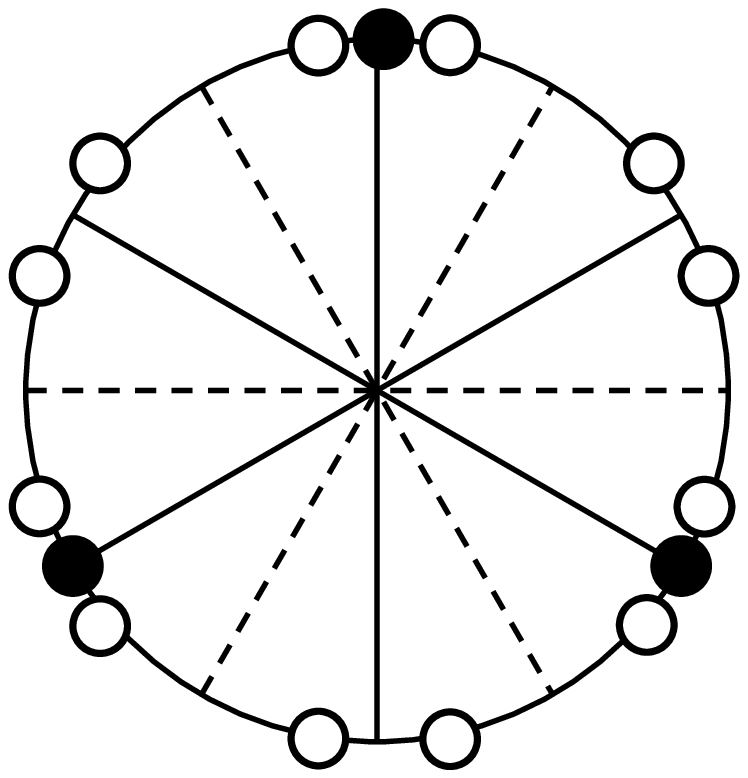}\label{fig:Dv-1}}
 \quad 
\subfloat[$D_{3v}$]{\includegraphics[height=2cm]{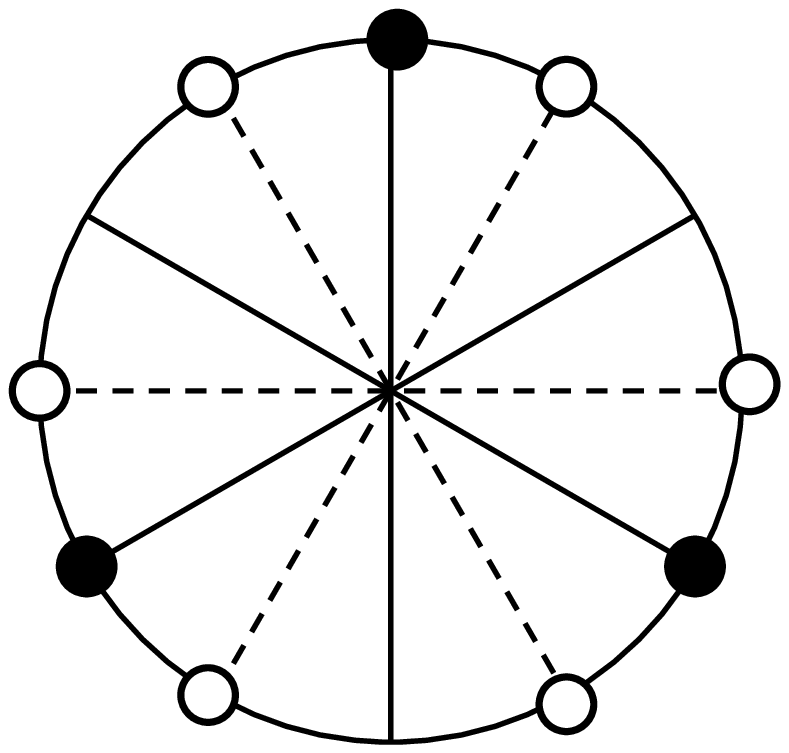}\label{fig:Dv-2}}
 \quad 
\subfloat[$D_3$]{\includegraphics[height=2cm]{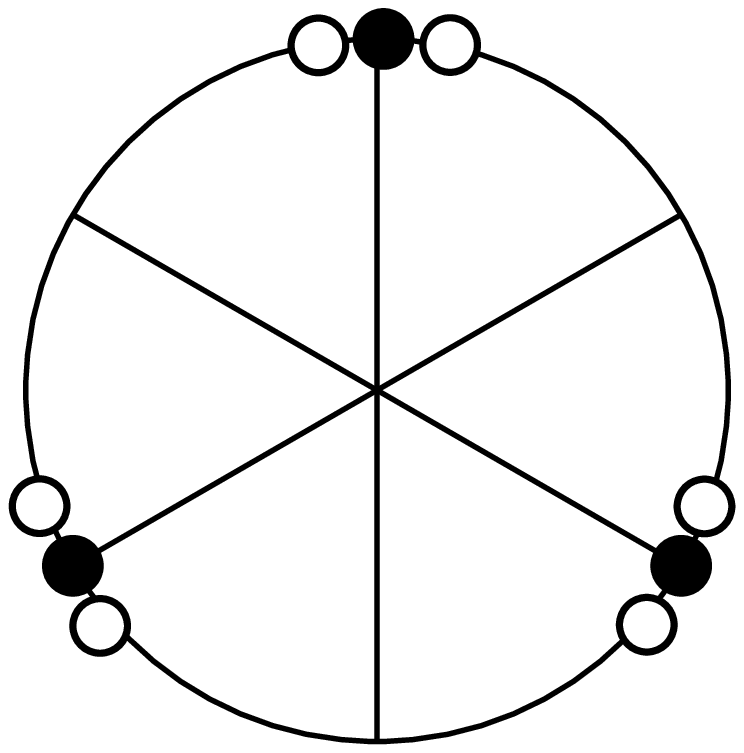}\label{fig:D-1}}
 \quad 
\subfloat[$D_3$]{\includegraphics[height=2cm]{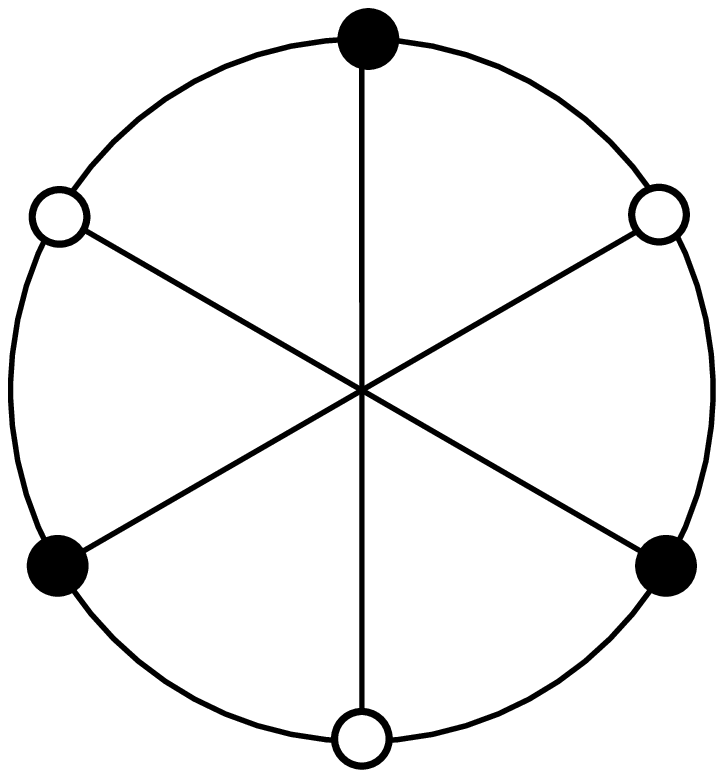}\label{fig:D-2}}
 \quad 
 \caption{Examples of multiple nearest points.
 The figures show the top (or the base) of a prism. Lines show $2$-fold axes
 and broken lines show vertical mirror planes. The principal axis passes
 the center of the circle and the black circles are the point forming a
 prism. The white circles are examples of two nearest points for each
 black circle.
 (a) When a white circle is not on a mirror plane,
 $D_{3v}$ produces $24$ points. 
 (b) When a white circle is on a mirror plane,
 $D_{3v}$ produces $12$ points.
 (c) When a white circle is on a mirror plane,
 $D_3$ produces $12$ points.
 (d) When a white circle is on a rotation axes, 
 $D_3$ produces a prism. }
\label{fig:nearest-points}
\end{figure}

This selection gives an agreement of direction to the robots
of $P_i$. 
The robots of $P_i$ circulates toward the nearest point
of $P_j$ and twist their prism. 
This movement resolves the collisions among
the perpendiculars from the robots of $P_i$
to the target plane. 

\subsubsection{Third Phase: Collision Avoidance on the Principal
Axis}

When $\gamma(P)$ is a dihedral group and some robots are on the
principal axis,
we need another trick to resolve the collisions of these robots.
Clearly,
these robots form element(s) of 
the $\theta(P)$-decomposition of $P$ and
the size of each of such elements $P_k$ is two. 

We also use an element of $P_j$ that forms a ``twisted'' prism 
in the same way as the previous case. 
Each point of $p \in P_k$ selects the nearest point of $P_j$, 
however in this case, the robots of $P_k$ do not move. 
Other robots consider the vertices of the twisted prism
as possible destinations of this fictitious move. 

\subsubsection{Fourth Phase: Agreement of the Target Plane} 

The robots agree on the target plane. 
Depending on $\theta(P)$ of the current configuration $P$,
we have the following five cases: 
When $\gamma(P)$ is a cyclic group or a dihedral group, 
the robots agree on the plane perpendicular to the
principal axis and containing $b(P)$.
The exceptional case is when $\theta(P) = C_{2h}$.
In this case, since the robots are not on one plane,
there exists at least one element $P_i$
of the $\theta(P)$-decomposition of $P$ such that $|P_i| = 4$.
Let $i$ be the minimum index among such elements.
Then, the robots agree on the plane formed by $P_i$.
We note that $P_i$ forms a rectangle perpendicular to the
horizontal mirror plane and the agreed plane
is not a mirror plane for $P$. 

When $\theta(P)$ is a rotation-reflection, 
the robots agree on the mirror plane of $\theta(P)$. 

When $\theta(P)$ is a bilateral symmetry, 
the size of each element of the
$\theta(P)$-decomposition of $P$ is one or two. 
Since the robots are not on one plane,
there exists at least one element $P_i$ such that
$|P_i| = 2$ and forms a perpendicular line regarding the
mirror plane. 
If there is just one such element $P_i$, 
the robots agree on the plane defined by $P_i$ (a line)
and $P_1$ (a point).
If $i=1$, then the algorithm uses $P_2$ instead of $P_1$. 
Otherwise, let $P_i$ and $P_j$ be the elements with the
minimum and second-minimum index of such elements.
Then the robots agree on the plane defined by these two
elements. 
The selected plane is not a mirror plane for $P$.

When $\theta(P)$ is a central inversion, 
the size of each element of the
$\theta(P)$-decomposition of $P$ is one or two. 
Since the robots are not one one plane,
there are more than one elements of size two.
Each of these elements form a line and
they all intersect at the center of inversion. 
Let $P_i$ and $P_j$ be such elements
with the minimum and second-minimum index. 
Then the robots agree on the plane defined by these two elements. 

When $\theta(P)$ is $C_1$, by Lemma~\ref{lemma:ordering}, 
the robots can agree on the total ordering of themselves and 
agree on the plane defined by $P_1$, $P_2$, and $P_3$.

\subsubsection{Fifth Phase: Computation of Final Positions} 

Let $P$ and $\{P_1, P_2, \ldots, P_m\}$
be the current configuration and its 
$\theta(P)$-decomposition. 
Through the previous four phases,
for each element $P_i$ ($i=1, 2, \ldots, m$),
the foots of perpendiculars from the points of $P_i$ to the
target plane do not overlap. 

The robots land on the agreed plane along a perpendicular
to it.
However, to avoid collisions among different elements of the
$\theta(P)$-decomposition of $P$, 
each robot computes the destinations of all other robots. 
The computation of destinations starts from
$P_1$ and 
proceeds to $P_2, P_3, \ldots$.
When the destination of robot $p \in P_i$ is already
a destination of another robot $q \in P_j$ ($j < i$),
$p$ first computes the largest circle $C$ 
that is centered at the destination and does not contain
any other destinations of the robots in $P_1, P_2, \ldots, P_{i-1}$
except the center.
Then, let $C'$ be the concentric circle of $C$ whose radius is
$1/2^{i}$. Robot $p$ selects an arbitrary point of $C'$
as its destination and $C'$ is considered as (possible) destination
of $p$ in the subsequent computation. 
Finally, the robots move to their destinations
in the same cycle. 

As the non-oblivious robots can execute the proposed algorithm, 
we have the following theorem, that
together with Theorem~\ref{theorem:nec}, proves our main theorem. 
\begin{theorem}
 \label{theorem:suf}
 Irrespective of obliviousness,
 the FSYNC robots without chirality can form a plane
 from an initial configuration $P$ 
 if $\varrho(P)$ consists of
 $C_1$, $C_s$, $C_i$, $C_k$, $C_{k v}$, $C_{2h}$,
 $D_{\ell}$, $D_{\ell v}$, and $S_m$. 
\end{theorem}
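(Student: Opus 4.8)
The plan is to prove Theorem~\ref{theorem:suf} constructively by exhibiting a plane formation algorithm for each of the three types of solvable initial configurations, and verifying that no phase of the algorithm ever reintroduces an unsolvable symmetry. First I would handle \textbf{Type 1} (configurations whose rotation group is a 3D rotation group). Here the preparation step guarantees that the innermost transitive element $P_1$ of the $\theta(P)$-decomposition forms a regular tetrahedron, octahedron, dodecahedron, or icosidodecahedron. I would invoke Algorithm~\ref{alg:go-to-center} (the go-to-center algorithm) together with Lemma~\ref{lemma:show-symm-single}, which certifies that a single synchronous step yields a configuration $P'$ whose rotation group is a 2D rotation group and which has no horizontal mirror plane unless $\gamma(P')\in\{C_1,C_2\}$. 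This reduces Type 1 to Type 2 or Type 3, so the correctness of the whole algorithm for Type 1 follows once the other two types are handled.

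For \textbf{Type 2} (2D rotation groups with at least one rotation axis), the strategy is the five-phase landing procedure of Section~\ref{subsec:landing}. I would argue phase by phase. Phase one removes any horizontal mirror plane by moving the on-plane element $P_1$ to vertices of the fictitious prism; Lemma~\ref{lemma:remove-mirror} guarantees the result has no horizontal mirror plane except possibly $C_{2h}$, or collapses to $\gamma(P')\in\{C_1,C_2\}$. Phases two and three resolve the collisions that the naive ``drop onto the horizontal plane'' strategy would create: for dihedral groups, a prism element $P_i$ would land with coinciding perpendicular feet, so the robots twist the prism by circulating toward the nearest point of a non-prism element $P_j$, whose existence and uniqueness-of-nearest-point I would justify using the transitivity argument already sketched in Fig.~\ref{fig:nearest-points} (multiple nearest points would force $P_j$ to have too many points to be a single orbit). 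Phases four and five are purely local: all robots agree on a common target plane (perpendicular to the principal axis through $b(P)$, or the rectangle plane when $\theta(P)=C_{2h}$) and then compute final landing points, using the shrinking-concentric-circle scheme of radius $1/2^i$ to break any residual coincidences between different decomposition elements.

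For \textbf{Type 3} (no rotation axis), the configuration is asymmetric, bilaterally symmetric ($C_s$), or centrally symmetric ($C_i$). When it is asymmetric I would use Lemma~\ref{lemma:ordering} to obtain a total ordering of the robots and pick the plane through $P_1,P_2,P_3$. When $\theta(P)=C_s$ or $C_i$ I would exploit that every decomposition element has size one or two, select two independent size-two elements (or a size-two element together with a point), and agree on the plane they span, taking care that this plane is not the mirror plane itself so that no two symmetric robots collapse. The fifth phase then lands everyone with the same collision-avoidance computation. Throughout, I would verify the two invariants that make each phase legal in the FSYNC model: the set of robots to move is agreed upon (because the phase conditions partition all configurations), and no movement reintroduces a 3D rotation group, a $C_{kh}$ with $k\geq3$, a $D_{\ell h}$, or a persistent empty horizontal mirror plane.

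The hard part will be the collision-avoidance arguments in phases two, three, and five, and in particular proving that the ``twist'' in phase two genuinely separates all perpendicular feet on the target plane without spawning a new forbidden symmetry. The delicate point is that rotating $P_i$ toward $P_j$ must be done in a way that is consistent across all symmetric copies (so the common algorithm is well defined) yet asymmetric enough to destroy the coincidence of landing points; I would need to track exactly which symmetry survives the twist and confirm it lies in the solvable list of Theorem~\ref{theorem:main}. A secondary subtlety is the final concentric-circle displacement: I must check that choosing an arbitrary point of the radius-$1/2^i$ circle $C'$ is compatible with the common-algorithm constraint (symmetric robots must choose symmetric points) and that the geometrically decreasing radii guarantee $C'$ never escapes the agreed plane nor collides with a previously placed destination. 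Once these separation properties are established, conditions (i)--(iii) of plane formation hold and, since the argument never uses memory, the result holds irrespective of obliviousness, completing the proof.
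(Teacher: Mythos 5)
Your proposal follows essentially the same route as the paper's own proof: the same three-type classification, the go-to-center algorithm with Lemma~\ref{lemma:show-symm-single} for Type 1, and the five-phase landing procedure (fictitious prism via Lemma~\ref{lemma:remove-mirror}, prism twist toward the nearest point of a non-prism orbit, target-plane agreement, and the radius-$1/2^i$ concentric-circle scheme) for Types 2 and 3, with Lemma~\ref{lemma:ordering} handling the asymmetric case. The subtleties you flag as the hard part (that the twist and the circle displacements remain consistent with the common-algorithm constraint and do not reintroduce forbidden symmetry) are precisely the points the paper itself treats informally rather than with dedicated lemmas, so your plan matches the paper's argument in both structure and level of rigor.
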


\section{Conclusion and Discussion}
\label{sec:concl}

We considered the plane formation problem
by FSYNC robots without chirality. 
We extended the notion of symmetricity in \cite{YUY16}
to the composition of rotation symmetry and
reflection symmetry. 
We gave a characterization of initial configurations
from which the FSYNC robots without chirality can form
a plane.
We then showed a plane formation algorithm
for oblivious FSYNC robots without chirality 
because existing plane formation algorithm 
does not work correctly for our robots.

\bibliographystyle{plain}
\bibliography{./bibitems}

\begin{thebibliography}{10}

\bibitem{CDN16}
Serafino Cicerone, Gabriele~Di Stefano, and Alfredo Navarra.
\newblock Asynchronous embedded pattern formation without orientation.
\newblock In {\em Proceedings of the 30th International Symposium on
  Distributed Computing (DISC 2016)}, pages 85--98. Springer Berlin Heidelberg,
  2016.

\bibitem{CFPS12}
Mark Cieliebak, Paola Flocchini, Giuseppe Prencipe, and Nicola Santoro.
\newblock Distributed computing by mobile robots: gathering.
\newblock {\em SIAM J. Comput.}, 41(4):829--879, 2012.

\bibitem{C97}
Peter~R. Cromwell.
\newblock {\em Polyhedra}.
\newblock University Press, 1997.

\bibitem{DFSY15}
Shantanu Das, Paola Flocchini, Nicola Santoro, and Masafumi Yamashita.
\newblock Forming sequence of geometric patterns with oblivious mobile robots.
\newblock {\em Distrib. Comput.}, 28:131--145, 2015.

\bibitem{DPT13}
St\'ephane Devismes, Franck Petit, and S\'ebastien Tixeuil.
\newblock Optimal probabilistic ring exploration by semi-synchronous oblivious
  robots.
\newblock {\em Theor. Comput. Sci}, 498:10--27, 2013.

\bibitem{FIPS13}
Paola Flocchini, David Ilcinkas, Andrzej Pelc, and Nicola Santoro.
\newblock Computing without communicating: Ring exploration by asynchronous
  oblivious robots.
\newblock {\em Algorithmica}, 65(3):562--583, 2013.

\bibitem{FPSV14}
Paola Flocchini, Giuseppe Prencipe, Nicola Santoro, and Giovanni Viglietta.
\newblock Distributed computing by mobile robots: Solving the uniform circle
  formation problem.
\newblock In {\em Proceedings of the 18th International Conference on
  Principles of Distributed Systems (OPODIS 2014)}, pages 217--232. Springer
  International Publishing, 2014.

\bibitem{FPSW05}
Paola Flocchini, Giuseppe Prencipe, Nicola Santoro, and Peter Widmayer.
\newblock Gathering of asynchronous robots with limited visibility.
\newblock {\em Theor. Comput. Sci.}, 337:147--168, 2005.

\bibitem{FPSW08}
Paola Flocchini, Giuseppe Prencipe, Nicola Santoro, and Peter Widmayer.
\newblock Arbitrary pattern formation by asynchronous, anonymous, oblivious
  robots.
\newblock {\em Theor. Comput. Sci.}, 407:412--447, 2008.

\bibitem{FOKY10}
Nao Fujinag, Hirotaka Ono, Shuji Kijima, and Masafumi Yamashita.
\newblock Pattern formation through optimum matching by oblivious corda robots.
\newblock In {\em Proceedings of the 14th International Conference On
  Principles Of Distributed Systems (OPODIS 2010)}, pages 1--15. Springer,
  Berlin, Heidelberg, 2010.

\bibitem{FYOKY15}
Nao Fujinaga, Yukiko Yamauchi, Hirotaka Ono, Shuji Kijima, and Masafumi
  Yamashita.
\newblock Pattern formation by oblivious asynchronous mobile robots.
\newblock {\em SIAM J. Comput.}, 44(3):740--785, 2015.

\bibitem{MV16}
Marcello Mamino and Giovanni Viglietta.
\newblock Square formation by asynchronous oblivious robots.
\newblock In {\em Proceedings of the 28th Canadian Conference on Computational
  Geometry (CCCG16)}, pages 1--6, 2016.

\bibitem{SY99}
Ichiro Suzuki and Masafumi Yamashita.
\newblock Distributed anonymous mobile robots: Formation of geometric patterns.
\newblock {\em SIAM J. Comput.}, 28(4):1347--1363, 1999.

\bibitem{UYKY16}
Taichi Uehara, Yukiko Yamauchi, Shuji Kijima, and Masafumi Yamashita.
\newblock Plane formation by semi-synchronous robots in the three dimensional
  euclidean space.
\newblock In {\em Proceedings of the 18th International Symposium on
  Stabilization, Safety, and Security of Distributed Systems (SSS 2016)}, pages
  383--398. Springer International Publishing, 2016.

\bibitem{YS10}
Masafumi Yamashita and Ichiro Suzuki.
\newblock Characterizing geometric patterns formable by oblivious anonymous
  mobile robots.
\newblock {\em Theor. Comput. Sci.}, 411:2433--2453, 2010.

\bibitem{YUKY16}
Yukiko Yamauchi, Taichi Uehara, Shuji Kijima, and Masafumi Yamashita.
\newblock Plane formation by synchronous mobile robots in the three dimensional
  euclidean space.
\newblock {\em Journal of the ACM}, (to appear).

\bibitem{YUY16}
Yukiko Yamauchi, Taichi Uehara, and Masafumi Yamashita.
\newblock Brief announcement: Pattern formation problem for synchronous mobile
  robots in the three dimensional euclidean space.
\newblock In {\em Proceedings of the 35th ACM SIGACT-SIGOPS Symposium on
  Principles of Distributed Computing (PODC 2016)}, pages 447--449. ACM New
  York, 2016.

\end{thebibliography}

\newpage 
\appendix 

\section{Seventeen Symmetry Groups in 3D-space }
\label{sec:list-of-symmetries}

We summarize the seventeen symmetry groups in 3D-space.
Each rotation group is determined by rotation axes,
mirror planes, and their arrangement.
Our results also heavily rely on the order of
each symmetry group and horizontal mirror planes.
The following three tables shows these properties
together with typical polyhedra obtained by
a seed point in each symmetry group.

\begin{table}[h]
 \centering
 \caption{Symmetry groups without rotation axis}
 \begin{tabular}{|l|l|c|}
  \hline 
   & Symmetry & Order \\
  \hline
  $C_1$ & Identity element & 1 \\
  \hline 
  $C_i$ & Point of inversion & 2 \\
  \hline
  $C_s$ & Single mirror plane & 2 \\
  \hline
 \end{tabular}
\end{table}

\begin{table}[h]
 \centering
 \caption{Symmetry groups with 2D rotation groups}
 \begin{tabular}{|l|c|c|c|c|l|}
  \hline 
  & \multirow{2}{*}{Principal axis} & \multirow{2}{*}{Other axes} &
     Horizontal & \multirow{2}{*}{Order} &
     \multirow{2}{*}{Example polyhedra} \\
  & & & mirror & & \\ 
  \hline 
  $C_k$ & $k$-fold & - & N & $k$ & \\
  \hline 
  $C_{k h}$ & $k$-fold & - & Y & $2k$ & \\
  \hline 
  $C_{k v}$ & $k$-fold & - & N & $2k$ & Pyramid with regular $k$-gon base\\
  \hline 
  $D_{\ell}$ & $\ell$-fold & $\ell$ 2-fold axes & N & $2 \ell$ & \\ 
  \hline 
  $D_{\ell h}$ & $\ell$-fold &  $\ell$ 2-fold axes & Y & $4 \ell$ &
		      Hexagonal prism for $D_6$ \\
  \hline 
  $D_{\ell v}$ & $\ell$-fold &  $\ell$ 2-fold axes & N & $4 \ell$ &
		      Hexagonal anti-prism for $D_{6v}$\\ 
  \hline 
  $S_m$ & $m$-fold & - & Y & $m$ &
		      Hexagonal anti-prism for $S_{12}$\\ 
  \hline 
 \end{tabular}
\end{table}

\begin{table}[h]
   \centering
 \caption{Symmetry groups with 3D rotation groups}
 \begin{tabular}{|l|c|c|c|c|c|c|l|}
  \hline 
  & \multicolumn{4}{c|}{Rotation axes (folding)} &
  \multirow{2}{*}{Mirror planes} &
  \multirow{2}{*}{Order} &
  \multirow{2}{*}{Example polyhedra} \\
  \cline{2-5} 
  & 2 & 3 & 4 & 5 & & & \\ 
  \hline 
  $T$   & 3 & 4 & - & - & 0 & 12 & Snub tetrahedron \\
  \hline 
  $T_d$ & 3 & 4 & - & - & 6 & 24 & Regular tetrahedron\\
  \hline 
  $T_h$ & 3 & 4 & - & - & 3 & 24 & \\
  \hline 
  $O$   & 6 & 4 & 3 & - & 0 & 24 & Snub cube\\
  \hline 
  $O_h$ & 6 & 4 & 3 & - & 9 & 48 & Cube, regular octahedron\\
  \hline 
  $I$   & 15 & 10 & - & 6 & 0 & 60 & Snub icosahedron\\
  \hline 
  $I_h$ & 15 & 10 & - & 6 & 15 & 120 & Regular dodecahedron \\
  \hline 
 \end{tabular}
 \end{table}

Fig.~\ref{fig:O-mirrors} shows the 
arrangement of the rotation axes and the mirror planes
of $D_{3h}$, $D_{3v}$, $O_h$, and $O_v$. 
The arrangement of the mirrors of $O_h$ is related to that of
$T_h$ and $T_d$.

\begin{figure}[t]
 \centering
\subfloat[Mirrors of $D_{3h}$]{\includegraphics[height=2cm]{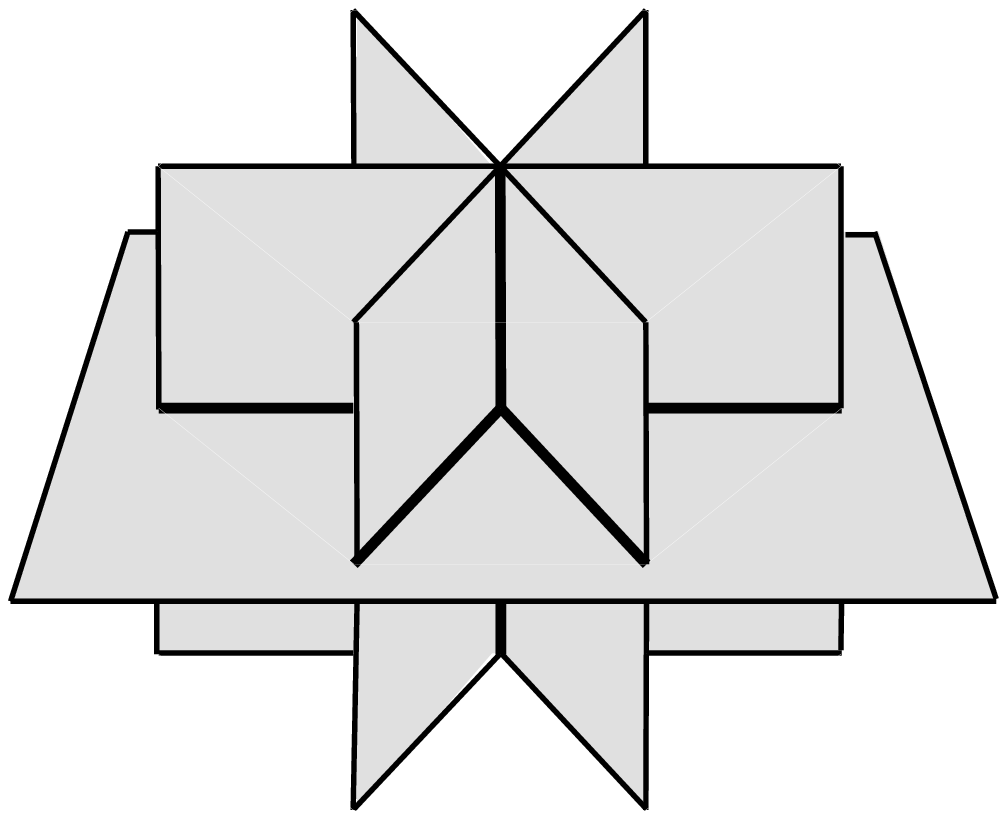}\label{fig:Dh-mirrors}}
 \quad 
\subfloat[Mirrors of $D_{3v}$]{\includegraphics[height=2cm]{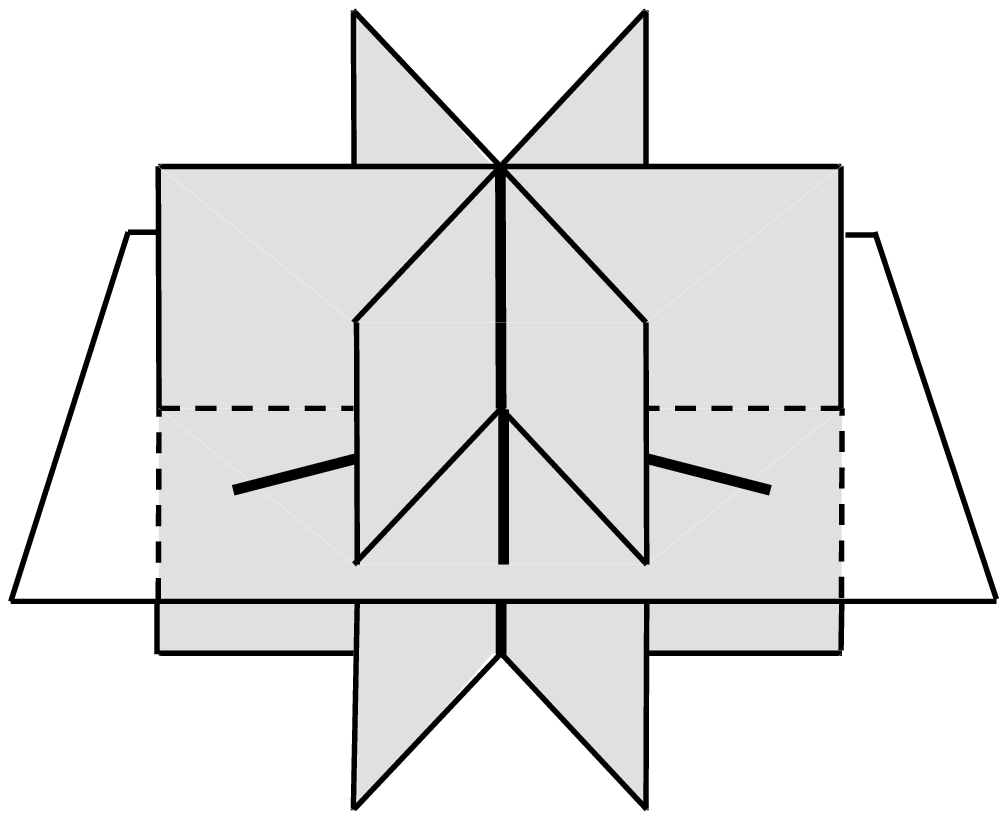}\label{fig:Dv-mirrors}}
 \quad 
\subfloat[Three mirrors of $O_h$]{\includegraphics[height=2cm]{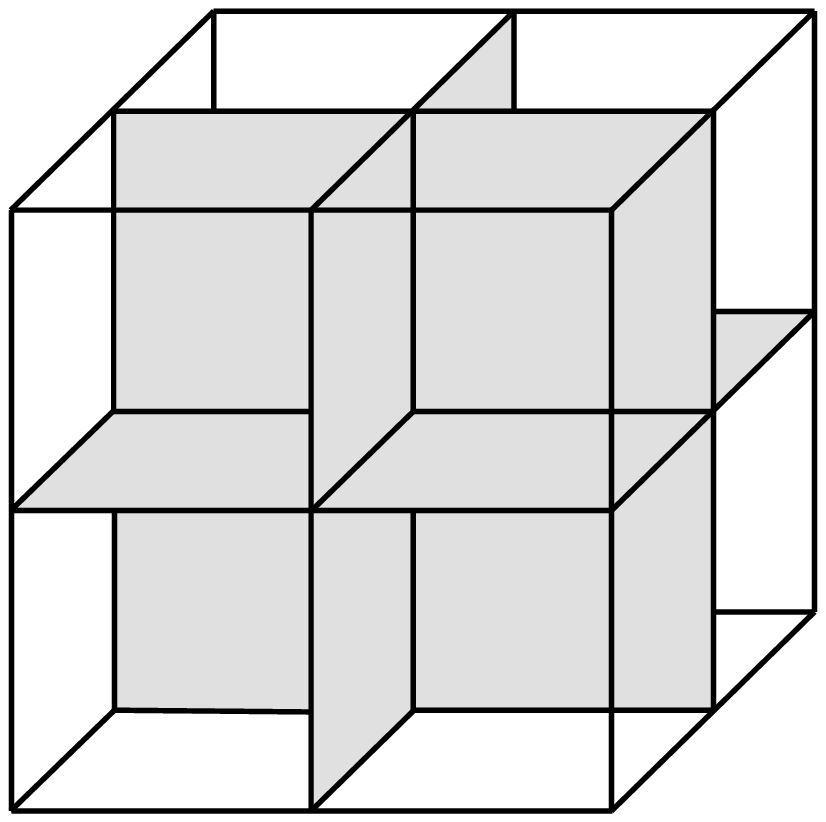}\label{fig:O-3mirrors}}
 \quad 
 \subfloat[One of the six mirrors of $O_h$]{\includegraphics[height=2cm]{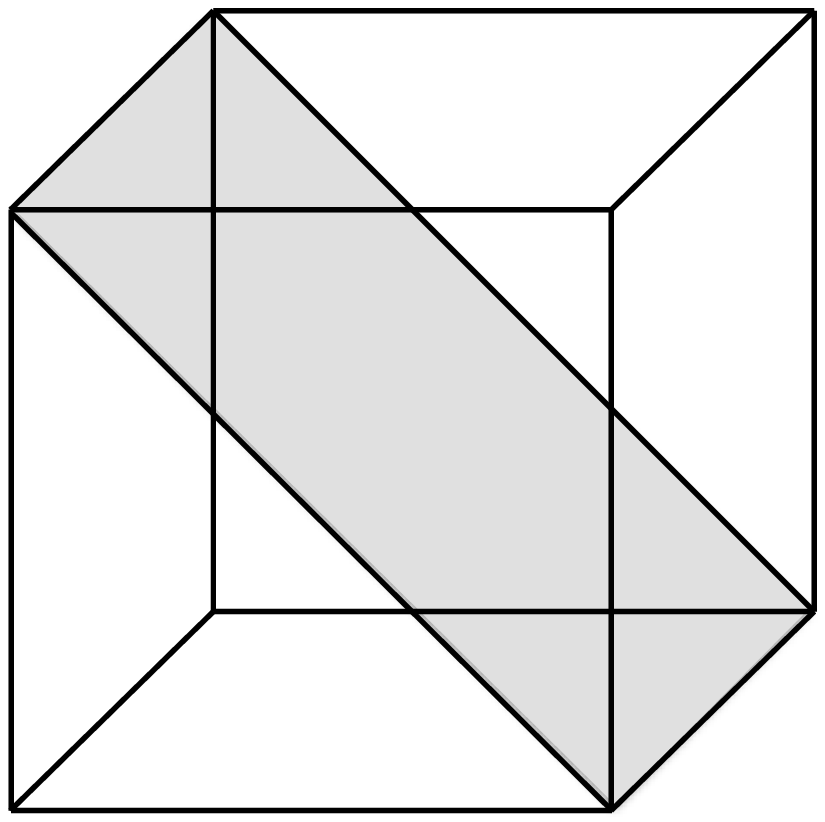}\label{fig:O-6mirrors}}
 \caption{Mirrors in composite symmetry.
 The rotation axes (bold lines)
 and mirror planes of $(a)$ $D_{3h}$ and $(b)$ $D_{3v}$. 
 Since $T_d, T_h \prec O_h$,
 the mirror planes shows the arrangement of rotation axes and
 mirror planes of $(c)$ $T_h$ and $(d)$ $T_d$. 
 For example, the three mirror planes of $(c)$
 and the $3$-fold axes connecting the opposite vertices of a cube
 also show the arrangement of mirror planes and rotation axis of $T_h$. 
}
\label{fig:O-mirrors}
\end{figure}

\end{document}